\newcommandx{\unsure}[2][1=]{\todo[inline,linecolor=red,backgroundcolor=red!25,bordercolor=red,#1]{#2}}
\newcommandx{\change}[2][1=]{\todo[inline,linecolor=blue,backgroundcolor=blue!25,bordercolor=blue,#1]{#2}}
\newcommandx{\info}[2][1=]{\todo[inline,linecolor=OliveGreen,backgroundcolor=OliveGreen!25,bordercolor=OliveGreen,#1]{#2}}
\newcommandx{\improve}[2][1=]{\todo[inline,linecolor=orange,backgroundcolor=orange!25,bordercolor=orange,#1]{#2}}
\crefname{hypothesis}{Hypothesis}{Hypotheses}
\title{Detectability Conditions and State Estimation for Linear Time-Varying and Nonlinear Systems\thanks{This manuscript was accepted for publication in a future issue of the SIAM Journal on Control and Optimization.
	}}
\author{Markus Tranninger\thanks{Institute of Automation and Control, Graz University of Technology, Graz, Austria. 
		(\email{markus.tranninger@tugraz.at}).} %
	\and Richard Seeber \thanks{Christian Doppler Laboratory for Model-Based Control of Complex Test Bed Systems, Institute of Automation and Control, Graz University of Technology, Graz, Austria. }
	\and Martin Steinberger\footnotemark[2]
	\and Martin Horn\footnotemark[3]
	\and Christian Pötzsche \thanks{Department of Mathematics, Alpen-Adria Universität Klagenfurt, Klagenfurt, Austria.}
   }
\def\Chiupper{\lambda^\mathrm{s}}
\def\Chilower{\lambda^\mathrm{i}}
\def\muupper{\mu^\mathrm{s}}
\def\Lupper{\lambda^\mathrm{s}}
\def\Llower{\lambda^\mathrm{i}}
\def\Bupper{\beta^\mathrm{s}}
\def\Blower{\beta^\mathrm{i}}
\def\Bnumlo{\beta^{H,\mathrm{i}}}
\def\Bnumup{\beta^{H,\mathrm{s}}}
\def\SigmaED{\ensuremath{\Sigma_{\mathrm{ED}}}}
\def\SigmaL{\ensuremath{\Sigma_{\mathrm{L}}}}
\def\PL{\ensuremath{\P_{\mathrm{L}}}}
\newcommand{\changed}{}
\begin{document}

\maketitle

\begin{abstract}
 This work proposes a detectability condition for linear time-varying systems based on the exponential dichotomy spectrum. 
 The condition guarantees the existence of an observer, whose gain is determined only by the unstable modes of the system. 
 This allows for an observer design with low computational complexity compared to classical estimation approaches.
 An extension of this observer design to a class of nonlinear systems is proposed and local convergence of the corresponding estimation error dynamics is proven.
 Numerical results show the efficacy of the proposed observer design technique.
\end{abstract}

\begin{keywords}
detectability, time-varying systems, exponential dichotomy spectrum, nonlinear observer design
\end{keywords}

\begin{AMS}
  93B07, %
  34D08, %
  34D09, %
\end{AMS}

\section{Introduction}

In recent years, detectability analysis of complex dynamical systems gained a lot of research attention in different engineering fields ~\cite{Bocquet2017,frank2018adetectability,hoger2019ontherelation,tranninger2020detectability,wang2020detectability}.
Despite the fact that detectability is very well studied for linear time invariant systems~\cite{sontag2013mathematical}, the underlying theory and existing detectability conditions are less mature for the time varying and/or nonlinear setting.
In control theory, it is well known that detectability is closely related to the existence of observers, i.e., dynamical systems, which estimate the state of some physical system based on measurements and a system model~\cite{sontag2013mathematical}.
In this work, a system model of the form
\begin{align}\label{eq:NL}
	\dot \x(t) &= \f(\x,\u),\quad \y(t) = \C(t)\x(t), \quad t\in\changed{\mathds R_{\geq 0}} %
\end{align}
is assumed to be known. 
Here, $\x(t)\in \mathds R^n$, $\u(t)\in \mathds R^m$ and $\y(t)\in \mathds R^p$ are the system's state, input and output, respectively.
In order to guarantee uniqueness of a solution to the initial value problem~\eqref{eq:NL} with the initial time $t_0\in\changed{\mathds R_{\geq 0}}$ and the initial state $\x(t_0)=\x_0$, it is assumed that $\f:\mathds R^n \times \mathds R^m\rightarrow \mathds R^n$ is locally Lipschitz continuous in the first argument and piecewise continuous in the second argument.
\changed{The solution is assumed to exist for all $t\geq t_0$.}
Similar to~\cite{frank2018adetectability,KONRAD1998}, the considered observer is assumed to be of the form
\begin{equation}\label{eq:NL:obsv}
	\dot{\hat\x}(t) = \f(\hat\x,\u) + \L(t,\hat\x)\left[\y(t)-\C(t)\hat\x(t)\right], \quad \hat \x(t_0) = \hat \x_0.
\end{equation}
The observer gain $\L:\changed{\mathds R_{\geq 0}} \times \mathds R^n \rightarrow \mathds R^{n\times p}$, which is continuous in the first argument and locally Lipschitz continuous in the second argument, has to be designed.
The goal of this work is to present conditions for the existence of a gain $\L$ that guarantees (at least locally) a uniform exponential convergence of the estimation error $\e(t)=\x(t)-\hat\x(t)$.

State estimation problems do not exclusively occur in control theory, but also, e.g., in synchronization problems for chaotic systems~\cite{pecora1997fundamentals,brown1997synchronization,boccaletti2002thesynchronization,pecora2015chaos}.
Such problems arise for example due to agents interacting in complex networks and are of key importance in biological, chemical or social processes~\cite{arenas2008synchronization}.
In the synchronization terminology, system~\eqref{eq:NL} in combination with~\eqref{eq:NL:obsv} is referred to as a driver-receiver process~\cite{pecora2015chaos} and the challenge is to design a coupling gain $\L$ such that the receiver $\hat\x$ is synchronized with the driver $\x$. 

In geosciences, optimal state estimation is also known as data assimilation~\cite{trevisan2011onthekalman,carrassi2008dataassimilation,Bocquet2017,carrassi2018dataassimilation}.
Data assimilation is a key ingredient for many algorithms to solve problems in, e.g., weather forecasting, oceanography or numerical geology.
There, usually large scale nonlinear models with hundreds to thousands of states are considered together with significant model and measurement uncertainties. 
Hence it is vital for any numerical estimation scheme to confine the estimation and prediction to the part of the system, which dominates the dynamical behavior~\cite{trevisan2011onthekalman,carrassi2008dataassimilation,maclean2021particle}. 
A comprehensive overview of relevant methods and applications in these fields can be found in~\cite{asch2016dataassimilation}.

\subsection{Related Work and Contribution}

A key concept in data assimilation is the assimilation on the unstable subspace (AUS), which was introduced by A. Trevisan and collaborators~\cite{trevisan2011onthekalman} for nonlinear stochastic discrete time systems.
The resulting algorithms are modifications of the extended Kalman filter, where the computation of the error covariances and the Kalman gain is confined to the unstable and neutral tangent subspace of the solutions of the full order filter~\cite{trevisan2011onthekalman,carrassi2008dataassimilation}.
With this approach, the solution of the filter Riccati equation has to be computed only for the unstable and neutral tangent subspace in order to obtain the filter gain.
Usually, this subspace is of significantly lower dimension than the system's state space and hence the computational complexity of the resulting estimation algorithm can be drastically reduced compared to the classical extended Kalman filter (EKF).
Since the initial investigations~\cite{trevisan2011onthekalman}, algorithms for continuous-time~\cite{frank2018adetectability} and discrete-time systems~\cite{carrassi2008dataassimilation,palatella2013Lyapunov} were presented and analyzed.

In~\cite{frank2018adetectability}, a data assimilation scheme based on the numerical approximation of regular Lyapunov exponents for continuous time nonlinear systems is presented. 
The resulting observer design is numerically very efficient, because it does not require solving a differential Riccati equation to compute the observer feedback gain.
The underlying existence condition is, however, conservative.
It requires the number of linearly independent measurements to be at least equal to the number of non-negative Lyapunov exponents. 
Hence, this condition is not fulfilled, for example, by the simple observable linear time invariant double integrator system where the first integrator state is measured.

For linear and nonlinear stochastic discrete time systems, AUS algorithms are discussed in~\cite{trevisan2011onthekalman,carrassi2008dataassimilation,carrassi2018dataassimilation,Bocquet2017,maclean2021particle}.
Such algorithms are very popular, because many numerical studies showed that AUS algorithms are at least as efficient but simpler to implement and computationally less demanding than their original EKF counterparts~\cite{palatella2013Lyapunov}.
Based on the AUS ideas, a detectability condition is introduced in~\cite{grudzien2017asymptotic,Bocquet2017}. 
There, the convergence of the error covariance matrices of the AUS Kalman filter is investigated for linear systems in the presence of bounded additive model errors.
The proposed detectability condition has some similarity to the results presented in~\cite{tranninger2020detectability,tranninger2020uniform} for the deterministic continuous time setting.
Albeit numerical~\cite{tranninger2020efficient,carrassi2008dataassimilation} and theoretical investigations, cf.~\cite{tranninger2020detectability,grudzien2017asymptotic,Bocquet2017}, have shown the validity of the underlying AUS ideas, a formal proof of the estimation error convergence for nonlinear systems is still missing to the best of the authors' knowledge.

This work presents a detectability criterion and an observer design strategy for systems of the form~\eqref{eq:NL}.
It is a significant extension of the ideas presented in~\cite{tranninger2020uniform,tranninger2020detectability} and relies on numerically stable algorithms to compute the dichotomy spectrum of dynamical systems~\cite{Dieci2008}.
The spectral intervals of the dichotomy spectrum are associated with the manifolds comprising solutions with a common exponential growth rate~\cite{siegmund2002dichotomy}.
For linear time invariant systems, the proposed detectability definitions and conditions reduce to well known criteria~\cite{sontag2013mathematical}.
Based on ideas similar to AUS strategies, the observer feedback gain is computed only on a reduced order state space, which corresponds to the unstable and neutral tangent space of the observer trajectory.
Compared to the deterministic extended Kalman-Bucy filter, this results in a numerically efficient observer for systems with a large system order.
\changed{The existence conditions for the observer are shown to be fulfilled for a class of uniformly observable nonlinear systems.} 
Moreover, the present work is a step towards closing the gap between theory and applications of AUS algorithms, because it provides a convergence proof of the resulting observer in a deterministic continuous time framework.

The paper is structured as follows: \Cref{sec:prel} introduces required tools for the stability analysis of linear time varying and nonlinear systems.
\Cref{sec:spectra} summarizes the Lyapunov and exponential dichotomy spectrum, which are crucial ideas for the detectability condition and observer design presented in~\Cref{sec:EDObserverdesign} for linear time varying systems.
In~\Cref{sec:NLobsv}, the results obtained in~\Cref{sec:EDObserverdesign} are then extended to nonlinear systems of the form~\eqref{eq:NL} following the ideas of the (deterministic) extended Kalman-Bucy filter.
Detailed numerical investigations of the proposed observer algorithm are presented in~\Cref{sec:example} for the chaotic nonlinear Lorenz'96 model~\cite{lorenz96predictability}.
Together with the conclusion, further research directions are pointed out in~\Cref{sec:conclusions}.
It should be remarked that the results presented in~\Cref{sec:EDObserverdesign} extend the authors' original research results~\cite{tranninger2020detectability,tranninger2020uniform}.
Material already published in~\cite{tranninger2021beobachterentwurf} is partially included in~\Cref{sec:EDObserverdesign} in order to make the paper self-contained.

\emph{Notation:}  Matrices are printed in boldface capital letters, whereas column vectors are boldface lower case letters. 
The matrix $\I_n$ is the $n\times n$ identity matrix.
\changed{Moreover, $\M = \diag{\M_1,\ldots,\M_j}$ denotes a (block) diagonal matrix with entries $\M_1$, $\ldots$, $\M_j$.}
The 2-norm of a vector or the corresponding induced matrix norm is denoted by $\|\cdot\|$.
Symmetric positive definite (positive semidefinite) matrices $\M^\transp = \M$ are denoted by $\M \succ 0$ ($\M \succeq 0$). 
If for two symmetric matrices $\M_1$, \changed{$\M_2$} it holds that $\M_2 - \M_1 \succ 0$ ($\succeq 0$), then $\M_1\prec \M_2$ ($\M_1\preceq \M_2$).
In dynamical systems, differentiation of a vector $\x$ with respect to time $t$ is expressed as $\dot \x$.
When writing such systems, time dependence of state (usually $\x$) and output (usually $\y$) is suppressed and only time dependence of the system's parameters is stated explicitly. 
\changed{Moreover, $\mathds R$ and $\mathds{R}_{\geq 0}$ denote the reals and non-negative reals, respectively.}

\section{Preliminaries}\label{sec:prel}

This section briefly summarizes stability results for dynamical systems.
\changed{
	First, stability properties of solutions of differential equations
	\begin{equation}\label{eq:system_nonlinear_Lyap}
		\dot \x = \f(t,\x)
	\end{equation}
	with a continuous function $\f:\mathds{R}_{\geq 0} \times \mathds R^n \mapsto \mathds R^n$ are considered 
	for times $t\in \changed{\mathds R_{\geq 0}}$.
	The initial state at the initial time $t_0\in\changed{\mathds R_{\geq 0}}$ is denoted by $\x(t_0)=\x_0$.
	It is assumed that the solution of~\eqref{eq:system_nonlinear_Lyap} exists for all $t\geq t_0$.}

For the stability assessment, one is often interested in the behavior of solutions starting near an equilibrium $\x_r$ fulfilling $\vc 0 = \f(t,\x_r)$.
It is assumed that $\x_r = \vc 0$ is an equilibrium\footnote{For \changed{assessing stability of} any other solution $\x_r(t)$, consider new coordinates $\z(t) = \x(t)-\x_r(t)$.} of~\eqref{eq:system_nonlinear_Lyap} in the following.

Details on different stability concepts can be found, e.g., in~\cite{Hahn1967,Hinrichsen2006}.
The notions used in this paper are introduced in the following %
\begin{definition}[stability notions]\label{def:stability}
	The \changed{equilibrium} $\x_r=\vc 0$ of~\eqref{eq:system_nonlinear_Lyap}, is called
	\begin{enumerate}[label=(\roman*)]
		\item \changed{(locally)} exponentially stable, if for some real $\mu>0$ and every $t_0\in\mathds J$ there exist scalars $\rho(t_0)>0$ and $K(t_0)\geq 1$ such that for every $\x_0$ with $\|\x_0\|\leq \rho(t_0)$ one has
		\begin{equation}\label{eq:exponential_stability}
			\|\x(t)\| \leq K(t_0)e^{-\mu(t-t_0)}\|\x(t_0)\|\;\;\text{ for all $t\geq t_0$;}
		\end{equation} \label{def:stability:ES}
		\item \changed{(locally)} uniformly exponentially stable, if $K$, $\rho$ in (ii) are independent of $t_0$;\label{def:stability:UES}
		\item globally exponentially stable or globally uniformly exponentially stable, if~\ref{def:stability:ES} or~\ref{def:stability:UES} is fulfilled for all $\x_0\in\mathds R^n$, respectively.\label{def:stability:GES}
	\end{enumerate}
\end{definition}

\subsection{Stability of Linear Time-Varying Systems}\label{sec:prel_stability}
In the following, the linear time-varying autonomous system
\begin{equation}\label{eq:sys:LTV}
	\dot \x = \A(t)\x
\end{equation}
with $\x(t) \in\mathds{R}^n$ is considered for $t \in \changed{\mathds R_{\geq 0}}$.
It is assumed that $\A: \changed{\mathds R_{\geq 0}}\rightarrow \mathds R^{n\times n}$ is continuous and bounded.
The following statements can be found in standard textbooks, see, e.g.,~\cite{Hinrichsen2006}.
System~\eqref{eq:sys:LTV} has the unique solution 
$	\x(t) = \mtPhi(t,t_0)\x_0,$
where $\mtPhi:\changed{\mathds R_{\geq 0}}\times \changed{\mathds R_{\geq 0}}\rightarrow \mathds R^{n\times n}$ is the state transition matrix, $\x(t_0)=\x_0\in\mathds{R}^n$ is the initial state and $t_0\in\changed{\mathds R_{\geq 0}}$ is the considered initial time. 
The state transition matrix can be obtained from the associated fundamental matrix differential equation
\begin{equation}\label{eq:fundamental_prel}
	\dot{\X} = \A(t) \X, \quad \X(t)\in\mathds{R}^{n\times n}.
\end{equation}
For any solution of~\eqref{eq:fundamental_prel} with $\X(0)=\X_0$ as a non-singular matrix, the state transition matrix is given by
$\mtPhi(t,t_0) = \X(t)\X^{-1}(t_0).$

For a linear system~\eqref{eq:sys:LTV}, \changed{ all equilibria possess identical stability properties, which are entirely characterized by the} state transition matrix $\mtPhi$ \changed{ according to:}
\begin{lemma}[stability criteria for linear systems]\label{thm:stabilitycriteria}
	System~\eqref{eq:sys:LTV} is
	\begin{enumerate}[label=(\roman*)]
		\item globally exponentially stable, if and only if there exists a constant $\mu>0$ such that for every $t_0\in \changed{\mathds R_{\geq 0}}$ there exists a scalar $K(t_0)\geq 1$ such that
		\begin{equation}\label{eq:stab:es}
			\|\mtPhi(t,t_0)\| \leq K(t_0)e^{-\mu(t-t_0)} \text{ for all }t\geq t_0%
		\end{equation}
		\item globally uniformly exponentially stable, if and only if $K$ in (ii) is independent of $t_0$;
	\end{enumerate}
\end{lemma}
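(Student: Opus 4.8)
The plan is to reduce everything to the representation $\x(t)=\mtPhi(t,t_0)\x_0$ together with the defining property of the induced matrix norm, $\|\mtPhi(t,t_0)\| = \sup_{\|\x_0\|=1}\|\mtPhi(t,t_0)\x_0\|$, and the homogeneity $\mtPhi(t,t_0)(\alpha\x_0)=\alpha\,\mtPhi(t,t_0)\x_0$ of the linear flow. No analytic machinery is needed; the whole statement is the translation of a bound valid for all trajectories into a bound on the operator norm of the transition matrix, and back.

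For the ``if'' direction I would assume the transition matrix bound and estimate, for an arbitrary $\x_0\in\Rn$,
\[
\|\x(t)\| = \|\mtPhi(t,t_0)\x_0\| \le \|\mtPhi(t,t_0)\|\,\|\x_0\| \le K(t_0)\ee^{-\mu(t-t_0)}\|\x_0\|,
\]
which is exactly the inequality required in \eqref{eq:exponential_stability} (with $\rho(t_0)$ arbitrary, indeed $+\infty$, by linearity), so \eqref{eq:sys:LTV} is globally exponentially stable with the same $\mu$ and $K$. If in addition $K$ does not depend on $t_0$, the identical chain of inequalities yields global uniform exponential stability, establishing part (ii).

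For the ``only if'' direction I would start from global exponential stability, i.e.\ $\|\mtPhi(t,t_0)\x_0\|\le K(t_0)\ee^{-\mu(t-t_0)}\|\x_0\|$ for every $\x_0$ and every $t\ge t_0$; dividing by $\|\x_0\|$ for $\x_0\neq\vc 0$ and taking the supremum over $\|\x_0\|=1$ gives $\|\mtPhi(t,t_0)\|\le K(t_0)\ee^{-\mu(t-t_0)}$ with the same $\mu$ and $K$, and uniformity in $t_0$ is inherited verbatim. Should one wish to invoke only the local notion from \cref{def:stability}, homogeneity lets one first rescale any $\x_0$ to norm $\rho(t_0)$, apply the local estimate, and rescale back, recovering the same inequality; this is the standard reason why for linear systems local and global exponential stability coincide. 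The normalization $K(t_0)\ge 1$ is forced automatically by evaluating at $t=t_0$, where $\mtPhi(t_0,t_0)=\I_n$.

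I do not expect a genuine obstacle here. The only points needing a little care are the rescaling/homogeneity step (required only if the criterion is derived from the local stability notion rather than the global one) and the bookkeeping that one and the same pair $(\mu,K)$ serves both implications, so that the uniform and non-uniform versions run completely in parallel.
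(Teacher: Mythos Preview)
Your argument is correct. The paper does not actually supply its own proof of this lemma; immediately after the statement it simply remarks that ``these relations are well known in the literature'' and cites \cite[Chapter VIII]{Hahn1967}. Your reduction via the induced-norm identity $\|\mtPhi(t,t_0)\|=\sup_{\|\x_0\|=1}\|\mtPhi(t,t_0)\x_0\|$ together with the homogeneity rescaling is precisely the standard elementary argument one finds in such references, so there is nothing to compare beyond noting that you have filled in what the paper leaves to citation.
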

These relations are well known in the literature, see, e.g.~\cite[Chapter VIII]{Hahn1967}.

A state transformation, which preserves the stability properties of the system is called a Lyapunov transformation.
It is a smooth and invertible linear change of coordinates $\z(t)=\T(t)\x(t)$, where $\T(t)$, $\T^{-1}(t)$ and $\dot{\T}(t)$ are uniformly bounded for all $t\in\changed{\mathds R_{\geq 0}}$, see~\cite{Adrianova1995}.
The following results presented in~\cite[Chapter 3.3.4]{Hinrichsen2006} allow to construct a Lyapunov function for uniformly exponentially stable systems.
\begin{proposition}[Lyapunov function for linear time-varying systems]\label{thm:LyapLTV}
	Supposed that~\eqref{eq:sys:LTV} is uniformly exponentially stable, then:
	\begin{enumerate}[label=(\roman*)] 
		\item\label{item:uniqueness} There exists a unique bounded solution $\P:\changed{\mathds R_{\geq 0}} \rightarrow \mathds R^{n\times n}$ of the matrix differential equation
		\begin{equation}\label{eq:Lyapeq}
			\dot \P + \A^\transp(t)\P + \P\A(t) + \Q(t) = \vc 0
		\end{equation}
		for any $\Q(t)$ with $q_1\I_n\preceq \Q(t) \preceq q_2\I_n$ and any positive constants $q_1$, $q_2$.
		\item\label{item:diffP} The only bounded $\P$ which solves~\eqref{eq:Lyapeq} is given by
		\begin{equation}
			\P(t) =\integ{s}{t}{\infty}{\mtPhi^\transp(s,t)\Q(t)\mtPhi(s,t)} \text{ with } t\in\changed{\mathds R_{\geq 0}},
		\end{equation}
		and there exist positive constants $p_1,\,p_2$ such that $p_1\I_n \preceq \P(t) \preceq p_2\I_n$.
		\item\label{item:Lyapfcn} $V:\changed{\mathds R_{\geq 0}}\times \mathds R^n \rightarrow \changed{\mathds R_{\geq 0}}$ defined as $V(t,\x) = \x^\transp \P(t) \x$ is a Lyapunov function for~\eqref{eq:sys:LTV}, whose time derivative $\dot V = \frac{\partial V}{\partial t} + \frac{\partial V}{\partial x}\A(t)\x$ along the trajectory of~\eqref{eq:sys:LTV} satisfies
		\begin{equation}
			\dot V(t,\x) = -\x^\transp\Q(t)\x.
		\end{equation}
		\item \label{item:p2bound} The constant $p_2$ can be bounded according to
		$		p_2 \leq \frac{K^2 q_2}{2\mu}$,
		where $K$ and $\mu$ are obtained from the bound on the state transition matrix~\eqref{eq:stab:es}.
	\end{enumerate}
\end{proposition}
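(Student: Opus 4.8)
The plan is to establish all four parts by working with the explicit representation of $\P$ in~\ref{item:diffP}, namely $\P(t)=\int_t^\infty \mtPhi^\transp(s,t)\,\Q(s)\,\mtPhi(s,t)\,ds$: showing this integral is well defined and solves~\eqref{eq:Lyapeq} yields the existence half of~\ref{item:uniqueness}, the two-sided bound in~\ref{item:diffP}, and the estimate in~\ref{item:p2bound} in one stroke, after which uniqueness and~\ref{item:Lyapfcn} follow. First I would use \Cref{thm:stabilitycriteria} and uniform exponential stability to fix constants $K\geq 1$, $\mu>0$ with $\|\mtPhi(s,t)\|\leq Ke^{-\mu(s-t)}$ for all $s\geq t\geq 0$. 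Since $\Q(s)\preceq q_2\I_n$, the symmetric integrand is bounded in norm by $K^2 q_2 e^{-2\mu(s-t)}$, so the integral converges absolutely, $\P$ is bounded and symmetric, and $\P(t)\preceq p_2\I_n$ with $p_2=K^2q_2/(2\mu)$ --- which is precisely~\ref{item:p2bound}. Differentiating under the integral sign, using $\partial_t\mtPhi(s,t)=-\mtPhi(s,t)\A(t)$ and $\mtPhi(t,t)=\I_n$, the boundary term contributes $-\Q(t)$ and the integral term contributes $-\A^\transp(t)\P(t)-\P(t)\A(t)$, so $\P$ solves~\eqref{eq:Lyapeq}; this gives the existence part of~\ref{item:uniqueness}.

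For the lower bound $p_1\I_n\preceq\P(t)$ I would bring in boundedness of $\A$, say $\|\A(t)\|\leq a$ for all $t$. Grönwall's inequality applied to the fundamental equation run backward gives $\|\mtPhi(t,s)\|\leq e^{a(s-t)}$ for $s\geq t$, hence $\|\x\|=\|\mtPhi(t,s)\mtPhi(s,t)\x\|\leq e^{a(s-t)}\|\mtPhi(s,t)\x\|$. Together with $\Q(s)\succeq q_1\I_n$ this yields $\x^\transp\P(t)\x\geq q_1\int_t^\infty\|\mtPhi(s,t)\x\|^2\,ds\geq \frac{q_1}{2a}\|\x\|^2$, so $p_1=q_1/(2a)$ works; this completes the bounds claimed in~\ref{item:diffP}.

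Uniqueness is the step I expect to take the most care. If $\P_1$ and $\P_2$ are both bounded solutions of~\eqref{eq:Lyapeq}, then $\vc{D}=\P_1-\P_2$ solves the homogeneous equation $\dot{\vc{D}}+\A^\transp\vc{D}+\vc{D}\A=\vc 0$, and a direct computation shows $\frac{d}{dt}\bigl(\mtPhi^\transp(t,t_0)\vc{D}(t)\mtPhi(t,t_0)\bigr)=\mtPhi^\transp(t,t_0)\bigl(\dot{\vc{D}}+\A^\transp\vc{D}+\vc{D}\A\bigr)\mtPhi(t,t_0)=\vc 0$. Hence $\vc{D}(t_0)=\mtPhi^\transp(t,t_0)\vc{D}(t)\mtPhi(t,t_0)$ for every $t\geq t_0$; taking norms and using $\|\mtPhi(t,t_0)\|\leq Ke^{-\mu(t-t_0)}$ together with the boundedness of $\vc{D}$ shows the right-hand side tends to $\vc 0$ as $t\to\infty$, so $\vc{D}(t_0)=\vc 0$ for arbitrary $t_0\in\mathds R_{\geq 0}$, i.e.\ $\P_1\equiv\P_2$. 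This settles the uniqueness assertion in~\ref{item:uniqueness} and the ``only bounded solution'' statement in~\ref{item:diffP}.

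Finally, for~\ref{item:Lyapfcn}: the bound $p_1\I_n\preceq\P(t)\preceq p_2\I_n$ makes $V(t,\x)=\x^\transp\P(t)\x$ positive definite and decrescent, and differentiating along solutions of~\eqref{eq:sys:LTV} gives $\dot V=\x^\transp\dot\P\x+(\A(t)\x)^\transp\P\x+\x^\transp\P\A(t)\x=\x^\transp\bigl(\dot\P+\A^\transp\P+\P\A\bigr)\x=-\x^\transp\Q(t)\x$, as claimed. The genuinely nontrivial ingredients are the lower bound --- where boundedness of $\A$ enters --- and making the limiting argument in the uniqueness proof precise; the rest is routine manipulation of the transition matrix.
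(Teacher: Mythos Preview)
Your proof is correct and self-contained. The paper, however, does not actually prove this proposition: it simply cites the relevant statements in Hinrichsen and Pritchard's textbook (Lemma~3.3.36, Theorems~3.3.33 and~3.3.38, and eq.~(62)) and moves on. What you have written is essentially the standard argument one finds in that reference, so there is no methodological divergence to speak of---you have supplied the details the paper outsources. One minor remark: you correctly write $\Q(s)$ inside the integral, whereas the paper's displayed formula has $\Q(t)$; your version is the one that actually solves~\eqref{eq:Lyapeq}, so this looks like a typo in the paper rather than an error on your part.
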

\changed{Items~\ref{item:uniqueness} and~\ref{item:diffP} follow from~\cite[Lemma 3.3.36 and Theorem 3.3.38]{Hinrichsen2006}. 
	Item~\ref{item:Lyapfcn} follows from~\cite[Theorem 3.3.33 and eqs. (55)--(56)]{Hinrichsen2006} and item~\ref{item:p2bound} is given in \cite[eq. (62)]{Hinrichsen2006}.}
For time-invariant systems, the uniform and non-uniform stability notions coincide. Moreover, an eigenvalue spectrum $\sigma(\A)$ with negative real parts is equivalent to uniform exponential stability.
In general, the (time-dependent) eigenvalues of a time-varying coefficient matrix do not allow to reason about the stability properties~\changed{\cite[p. 257]{Hinrichsen2006}}.
In the following, two concepts for the generalization of eigenvalues to the time-varying setting will be discussed together with the numerical computation. 

\section{Two Important Spectra for Linear Time-Varying Systems}\label{sec:spectra}
The Lyapunov spectrum and the exponential dichotomy spectrum are two important generalizations of the eigenvalue spectrum to the time-varying case. 
A negative Lyapunov spectrum guarantees exponential stability~\changed{\cite[p. 6]{barreira2002lyapunov}}, whereas a negative exponential dichotomy spectrum guarantees uniform exponential stability.
Both spectra are introduced in the following based on~\cite{Dieci2007}.

\subsection{Lyapunov-Spectrum}
Let the real functional
\begin{align}\label{eq:definition_LEfunction}
	\Chiupper(\x) &= \limsup_{t\rightarrow \infty} \frac{1}{t}\ln \|\x(t) \| 
\end{align}
be defined for non-zero functions $\x:\changed{\mathds R_{\geq 0}} \rightarrow \mathds R^n$.
This functional provides an asymptotic upper bound of the exponential growth or decay for a given non-trivial solution $\x$ of~\eqref{eq:sys:LTV}. 
Note that this bound does not depend on $\x$ on a finite time interval and hence it can be assumed that $t_0=0$.

The functional~\eqref{eq:definition_LEfunction} is now applied to the solutions $\x_i(t)=\X(t)\e_i$ yielding $\changed{\ell}_i=\Chiupper(\x_i)$, where $\X:\changed{\mathds R_{\geq 0}} \rightarrow \mathds R^{n\times n}$ is an arbitrary fundamental solution satisfying~\eqref{eq:fundamental_prel} and $\e_i$ the $i$-th standard basis vector.
Minimizing $\sum_{i=1}^n\changed{\ell}_i$ over all possible fundamental solutions $\X$ results in the so-called \emph{upper} Lyapunov exponents $\Chiupper_i\coloneqq \changed{\ell}_i$ of system~\eqref{eq:sys:LTV}, see~\cite{Dieci2007}.
Without loss of generality, the Lyapunov-Exponents are assumed to be ordered\footnote{The ordering can always be achieved by a column permutation of $\X(t)$.} according to $\Chiupper_1\geq \Chiupper_2 \geq \ldots \geq\Chiupper_n$. 
The corresponding fundamental solution is then called an \emph{ordered normal Lyapunov basis} and the $n$ Lyapunov exponents are unique.
\changed{In the case of identical Lyapunov exponents, 
	their multiplicity is determined by the dimension of the solution space with the same exponent.}

The growth rate of all trajectories of~\eqref{eq:sys:LTV} can be bounded based on the largest upper Lyapunov-Exponent $\Chiupper_1$.
More specifically, for all $\epsilon>0$, there exists a $K_\epsilon\geq 1$ such that
$	\|\mtPhi(t,0)\| \leq K_\epsilon e^{(\Chiupper_1+\epsilon)t}$, \changed{see \cite[p. 6]{barreira2002lyapunov}.}
Hence, $\Chiupper_1<0$ implies exponential stability, because for a fixed $t_0$ and a sufficiently small $\epsilon>0$, the scalars in~\eqref{eq:stab:es} can be chosen according to $\mu=-(\Chiupper_1+\epsilon)>0$ and \mbox{$K(t_0) = K_\epsilon e^{(\Chiupper_1+\epsilon)t_0}\|\mtPhi(0,t_0)\|$}.

In general, there also exists an asymptotic lower bound on the growth rate of a trajectory, which does not necessarily coincide with the corresponding upper Lyapunov exponent, see, e.g. the example given in~\cite{perron1930dieordnungszahlen}.
These lower bounds can be obtained by considering the adjoint system 
\begin{equation}\label{eq:adjSystem}
	\dot{\bm\chi} = -\A^\transp(t)\bm\chi.
\end{equation}
Denoting the sorted upper Lyapunov exponents of~\eqref{eq:adjSystem} by $-\muupper_i$ with $-\muupper_1\leq -\muupper_2 \leq \cdots \leq -\muupper_n$ 
allows to introduce the \emph{lower} Lyapunov exponents
$	\Chilower_i = \muupper_i,\quad i=1,\ldots,n$
for system~\eqref{eq:sys:LTV}.
In general, it holds that $\Chilower_i\leq \Chiupper_i$~\cite{Dieci2007,lyapunov1992thegeneral}.
In~\cite{Dieci2003}, a spectrum based on the lower and upper Lyapunov exponents is proposed according to the following %
\begin{definition}[Lyapunov spectrum]\label{def:LyapSpektrum}
	The Lyapunov spectrum $\Sigma_L$ of~\eqref{eq:sys:LTV} is
	$\Sigma_L = \bigcup_{i=1}^n [\Chilower_i,\Chiupper_i].$
\end{definition}

For periodic systems and, in particular, time-invariant systems, the Lyapunov spectrum reduces to isolated points and it holds that $\Chilower_i=\Chiupper_i$.
Moreover, the Lyapunov exponents coincide with the real parts of the Floquet exponents for periodic systems or the eigenvalues for time-invariant systems, see~\cite[Theorem 63.4]{Hahn1967}.

The larger class of regular systems involves all linear time-varying systems, where the Lyapunov spectrum is a set of isolated points. 
The concept of regularity was introduced by Lyapunov~\cite{lyapunov1992thegeneral} and is often demanded for the numerical computation of the Lyapunov exponents.
\begin{definition}[regularity,~{{\cite[Def. 64.1]{Hahn1967}}}]\label{def:regularity}
	System~\eqref{eq:sys:LTV} is called regular, if $\Chilower_i=\Chiupper_i$ for all $i=1,2,\ldots,n$. 
	\changed{In this case, set $\lambda_i=\Chilower_i=\Chiupper_i$.}
\end{definition}

Regularity is hard to verify in practice for a specific system  \changed{which motivates to employ the exponential dichotomy spectrum as introduced in the following.}

\subsection{Exponential Dichotomy Spectrum} 
An important idea for the stability analysis of linear time-varying systems is the exponential dichotomy introduced by O. Perron as a generalization of hyperbolicity to the time-varying case, see~\cite{siegmund2002dichotomy}, \cite[Ch. 4, \parsymb 3]{daleckii2002stability} or \cite{coppel1978dichotomies}.

\begin{definition}[exponential dichotomy]\label{def:ED}
	\changed{A system $\dot \x = \A(t)\x$} possesses an exponential dichotomy, if there exists a projection matrix\footnote{A projection matrix is a matrix satisfying $\P^2=\P$.} $\P\in\mathds{R}^{n\times n}$, a fundamental solution $\X$ and real constants $K\geq 1$ and $\alpha>0$ such that 
	\begin{subequations}\label{eq:EDdef}
		\begin{align}
			&	\|\X(t)\P\X^{-1}(t_0)\| \leq Ke^{-\alpha(t-t_0)} \text{ for } t\geq t_0\geq 0 \text{ and } \label{eq:EDstable}\\
			&	\|\X(t)(\I_n-\P)\X^{-1}(t_0)\|  \leq Ke^{\alpha (t-t_0)} \text{ for } 0\leq t\leq t_0.\label{eq:EDunstable}
		\end{align} 
	\end{subequations}	
\end{definition}
It follows directly, that system~\eqref{eq:sys:LTV} is uniformly exponentially stable, if and only if it has an exponential dichotomy with $\P=\I_n$. 
\changed{In the time varying case, the system has an exponential dichotomy if and only if it has no eigenvalues on the imaginary axis.}
A spectrum based on the exponential dichotomy is introduced in~\cite{Sacker1978,siegmund2002dichotomy} as follows.
\begin{definition}[dichotomy spectrum,~{{{\cite[Definition 3.1]{siegmund2002dichotomy}}}}]\label{def:dichotomyspectrum}
	The dichotomy spectrum \SigmaED\ of~\eqref{eq:sys:LTV} is the set of all $\mu\in\mathds R$ for which the systems
	$		\dot{\bm\xi} = \left[\A(t)-\mu \I_n \right] \bm\xi$
	do \emph{not} have an exponential dichotomy.
\end{definition}
If~\eqref{eq:sys:LTV} is time-invariant, then \changed{$\SigmaED$ is equal to the real part of the eigenvalue spectrum $\sigma(\A)$}.
For systems~\eqref{eq:sys:LTV} with a bounded $\A(t)$, the dichotomy spectrum consists of $1\leq d\leq n$ compact and disjoint subintervals~\cite{siegmund2002dichotomy}.

\begin{remark}
	For a one-dimensional system $\dot x = a(t)x$, \changed{the so-called lower and upper Bohl exponents are defined according to} 
	\begin{align}
		\Blower_1 =  \liminf_{\substack{
				t-t_0\rightarrow \infty\\t_0\rightarrow\infty}}\frac{1}{t}\integ{\tau}{t_0}{t_0+t}{a(\tau)} \text{ and } \;
		\Bupper_1 = \limsup_{\substack{
				t-t_0\rightarrow \infty\\t_0\rightarrow\infty}}\frac{1}{t}\integ{\tau}{t_0}{t_0+t}{a(\tau)},
	\end{align}
	\changed{respectively.}
	The one-dimensional system then has an exponential dichotomy, if and only if $0<\Blower_1 \leq \Bupper_1$ or $\Blower_1\leq \Bupper_1 < 0$, see~\cite{daleckii2002stability}. 
	Its dichotomy spectrum is hence given by $\SigmaED=[\Blower_1, \Bupper_1]$.
	Moreover, it holds that $\Blower_1\leq \Llower_1 \leq \Chiupper_1\leq \Bupper_1$ and therefore $\SigmaL\subseteq \SigmaED$.
\end{remark}

\subsection{Numerical Approximation} \label{sec:numerical}
For simplicity, it is assumed that system \eqref{eq:sys:LTV} is regular.
The non-regular case is treated in~\cite{Dieci2007}.
In a first step, system~\eqref{eq:sys:LTV} is transformed to an upper triangular form using a Lyapunov transformation.
For systems with a real coefficient matrix $\A(t)$,~\cite[Theorem 3.3.1 and Remark 3.3.2]{Adrianova1995} state that this change of coordinates $\R(t)=\Q^\transp(t)\X(t)$ can be achieved using an orthogonal Lyapunov transformation matrix $\Q:\changed{\mathds R_{\geq 0}}\rightarrow \mathds R^{n\times n}$.
This transformation can be obtained numerically by means of a continuous QR decomposition~\cite{Dieci2007}.
In this case, $\Q$ and $\R$ are the solutions of 
\begin{subequations}\label{eq:QRfull}
	\begin{align}
		\dot{\R} &= \B(t)\R, \; \quad \B(t) = \Q^\transp(t) \A(t)\Q(t)-\S(t)\label{eq:fullqr1}\\
		\dot{\Q} &= \Q\S(t).\;\label{eq:fullqr2}
	\end{align}
\end{subequations}
The coefficients $s_{ij}(t)$ of the skew symmetric  $n\times n$ matrix $\S(t)$ are given by $s_{ij}(t) = \q_i^\transp(t) \A(t)  \q_j(t)$ for $i>j$, where $\q_i(t)$ denotes the $i$-th column of $\Q(t)$.

In a first step, it is assumed that the initial conditions are obtained by a QR decomposition of an ordered normal Lyapunov basis\footnote{The numerical implementation is discussed in Section~\ref{sec:reducedQR}.}, i.e., $\X_0=\Q_0\R_0$.
Under the assumption that all diagonal elements of $\R_0$ are non-negative, this decomposition is unique.
\changed{For regular systems~\eqref{eq:sys:LTV} transformed to an upper triangular form
	$\dot \z = \B(t)\z$ %
	with $\z(t)=\Q^\transp(t) \x(t)$, the Lyapunov exponents can be obtained by time-averaging of the diagonal elements according to
	\begin{equation}\label{eq:lyapforwardregularity}
		\lambda_i=  \lim_{t\rightarrow \infty} \frac{1}{t}\integ{\tau}{0}{t}{b_{ii}(\tau)}= \lim_{t\rightarrow \infty}\frac{1}{t} \integ{\tau}{0}{t}{\q_i^\transp(\tau)\A({\tau})\q_i(\tau)},
	\end{equation}
	see~\cite{lyapunov1992thegeneral}}.
To obtain the Lyapunov exponents, merely solving the orthogonal differential equation~\eqref{eq:fullqr2} and not the (possibly unbounded) solution of~\eqref{eq:fullqr1} is required.

Independent of the system's regularity, the exponential dichotomy spectrum can always be obtained from the diagonal elements of $\B$, see~\cite{Doan2016}.
To that end, let the two real functionals
\begin{subequations}\label{eq:bohlexponents}
	\begin{equation}
		\Blower(b) = \liminf_{\substack{
				t-t_0\rightarrow \infty\\t_0\rightarrow\infty}} \frac{1}{t}\integ{\tau}{t_0}{t_0+t}{b(\tau)}\quad \text{ and } \quad
		\Bupper(b) = \limsup_{\substack{
				t-t_0\rightarrow \infty\\t_0\rightarrow\infty}} \frac{1}{t}\integ{\tau}{t_0}{t_0+t}{b(\tau)}
	\end{equation}
\end{subequations}
be introduced for a scalar function $b:\changed{\mathds R_{\geq 0}}\rightarrow \mathds R$.
Applied to the diagonal entries of $\B$, these functionals provide the lower and upper Bohl exponents  $\Blower_i=\Blower(b_{ii})$ and $\Bupper_i=\Bupper(b_{ii})$, respectively, of the scalar systems
$\dot \eta_i  = b_{ii}(t)\eta_i$ for $i=1,2,\ldots,n.$
According to~\cite[Proposition 5]{Doan2016}, the dichotomy spectrum is then obtained as
\begin{equation}
	\SigmaED = \bigcup_{i=1}^{n} \Lambda_i \quad \text{with}\quad \Lambda_i = \left[\Blower_i,\Bupper_i\right].
\end{equation}
Note that the $n$ intervals $\Lambda_i$ are not necessarily disjoint and the obtained Bohl exponents are not necessarily sorted.
In many cases, however, a sorting similar to the Lyapunov exponents can be expected.
From the relations in~\eqref{eq:lyapforwardregularity} and \eqref{eq:bohlexponents} it follows that $\Blower_i\leq \lambda_i \leq \Bupper_i$ and $\SigmaL\subseteq\SigmaED$, i.e., the Lyapunov spectrum is contained in the exponential dichotomy spectrum.
This is also true in the non-regular case~\cite{Dieci2007}.
For periodic and time-invariant systems, both spectra coincide, i.e., $\SigmaL = \SigmaED$.
In the general time-varying case, however, this is not true as demonstrated in~\cite{Doan2016}.

The numerical approximation of the Bohl exponents turns out to be more difficult than the approximation of (regular) Lyapunov exponents, because $t_0$ and $t-t_0$ tend to infinity in~\eqref{eq:bohlexponents}.
For the approximation of the Bohl exponents, let
\begin{equation}\label{eq:Bohlapprox}
	\Bnumlo_i = \inf_{t_0\in\changed{\mathds R_{\geq 0}}}\frac{1}{H}\integ{\tau}{t_0}{t_0+H}{b_{ii}(\tau)} \quad\text{ and } \quad
	\Bnumup_i = \sup_{t_0\in\changed{\mathds R_{\geq 0}}}\frac{1}{H}\integ{\tau}{t_0}{t_0+H}{b_{ii}(\tau)}
\end{equation}
be introduced with a scalar parameter $H>0$.
\changed{It is stated in~\cite[Theorem 8.4]{Dieci2003} that for any $H > 0$, $\Lambda_i = [\Blower_i,\Bupper_i] \subseteq [\Bnumlo_i,\Bnumup_i]$.
	It is furthermore claimed that for $H > 0$ sufficiently large it holds that 
	$[\Bnumlo_i,\Bnumup_i] \subseteq [\Blower_i,\Bupper_i]$ and hence $\Lambda_i=[\Bnumlo_i,\Bnumup_i]$.
	The latter statement is not true as shown by the following counterexample. 
	\begin{example}
		Consider the scalar system $\dot x = \frac{1}{1+t}x$ which admits the fundamental solution $X(t)=(1+t)$.
		The exponential dichotomy spectrum is $\Sigma_{\mathrm{ED}} = \{0\}$ and hence $\Blower_1=\Bupper_1 = 0$.
		The computation of $\Bnumup$ gives 
		\begin{equation}
			\Bnumup_1 = \sup_t \frac{1}{H}\integ{\tau}{t}{t+H}{\frac{1}{1+\tau}} = \sup_t \frac{1}{H} \ln(1 + \frac{H}{1+t}) = \frac{1}{H}\ln(1+H)
		\end{equation}
		with a supremum at $t=0$. 
		Hence, for any finite $H>0$ it holds that $\Bnumup_1>0=\Bupper_1$.
	\end{example}
}

The \changed{correct} relation between the \changed{approximation}~\eqref{eq:Bohlapprox} and the bounds of the spectral intervals~\eqref{eq:bohlexponents} is summarized in the following
\begin{theorem}[approximation of the spectral intervals]\label{thm:approximation}
	For every $H>0$, it holds that
	\begin{equation}\label{eq:bapprox1}
		\Bnumlo_i \leq \Blower_i \leq \Bupper_i \leq \Bnumup_i.
	\end{equation}
	Moreover, for every $\epsilon>0$, there exists a $H_0>0$ such that for all $H\geq H_0$ it holds that
	\begin{equation}\label{eq:bapprox2}
		\Blower_i-\epsilon \leq \Bnumlo_i \text{ and } 	\Bnumup_i \leq \Bupper_i+\epsilon.
	\end{equation}
\end{theorem}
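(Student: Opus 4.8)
The plan is to reduce both claims to elementary statements about running averages of the scalar function $b:=b_{ii}$, which is \emph{bounded}: since $\S(t)$ is skew-symmetric one has $s_{ii}(t)\equiv 0$, hence $b_{ii}(t)=\q_i^\transp(t)\A(t)\q_i(t)$, and $\|\q_i(t)\|=1$ together with boundedness of $\A$ gives $|b(t)|\le M:=\sup_{t\ge 0}\|\A(t)\|$. Introduce the running average $m(s,\ell):=\frac{1}{\ell}\int_{s}^{s+\ell}b(\tau)\,d\tau$ for $s\ge 0$, $\ell>0$, so that $\Blower_i$ and $\Bupper_i$ are the lower and upper limits of $m(s,\ell)$ as $s\to\infty$ and $\ell\to\infty$, whereas $\Bnumlo_i=\inf_{s\ge 0}m(s,H)$ and $\Bnumup_i=\sup_{s\ge 0}m(s,H)$. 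I would argue for $\Bupper_i$ and $\Bnumup_i$; the bounds for $\Blower_i$ and $\Bnumlo_i$ follow verbatim after replacing $b$ by $-b$.

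For \eqref{eq:bapprox1}, the inequality $\Blower_i\le\Bupper_i$ is immediate, so it remains to prove $\Bupper_i\le\Bnumup_i$. Fix $H>0$ and a window $[s,s+\ell]$, and write $\ell=kH+r$ with $k=\lfloor \ell/H\rfloor$ and $0\le r<H$. Cutting the window into $k$ consecutive blocks of length $H$ and a remainder of length $r$, each block contributes at most $H\Bnumup_i$ to $\int_{s}^{s+\ell}b$ and the remainder at most $Mr\le MH$ in absolute value, whence $m(s,\ell)\le \frac{kH}{\ell}\Bnumup_i+\frac{MH}{\ell}$. The right-hand side converges to $\Bnumup_i$ as $\ell\to\infty$, uniformly in $s\ge 0$, so the upper limit of $m(s,\ell)$ is at most $\Bnumup_i$, i.e.\ $\Bupper_i\le\Bnumup_i$.

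For \eqref{eq:bapprox2}, I would argue by contradiction. If the claim failed for some $\epsilon>0$, there would be $H_k\to\infty$ and $s_k\ge 0$ with $m(s_k,H_k)>\Bupper_i+\epsilon$. If $(s_k)$ is unbounded, pass to a subsequence along which $s_k\to\infty$; then $(s_k,H_k)$ is an admissible pair in the double limit defining $\Bupper_i$, so $\Bupper_i\ge\limsup_k m(s_k,H_k)\ge\Bupper_i+\epsilon$, a contradiction. If $(s_k)$ is bounded, choose $N_k\to\infty$ with $N_k/H_k\to 0$ and $N_k>s_k$, set $\ell_k:=s_k+H_k-N_k$ (so $\ell_k\to\infty$ and $\ell_k/H_k\to 1$), and split $[s_k,s_k+H_k]=[s_k,N_k]\cup[N_k,s_k+H_k]$. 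Since the prefix integral is at most $M(N_k-s_k)\le MN_k=o(H_k)$ in absolute value,
\[
  m(N_k,\ell_k)=\frac{H_k}{\ell_k}\,m(s_k,H_k)-\frac{1}{\ell_k}\int_{s_k}^{N_k}b \;>\; \bigl(\Bupper_i+\epsilon\bigr)-o(1),
\]
where $|b|\le M$ and $\ell_k/H_k\to 1$ absorb the error terms. Thus $\liminf_k m(N_k,\ell_k)\ge\Bupper_i+\epsilon$, while $N_k,\ell_k\to\infty$ forces $\limsup_k m(N_k,\ell_k)\le\Bupper_i$ --- a contradiction. Hence $\Bnumup_i\le\Bupper_i+\epsilon$ for all sufficiently large $H$, and $\Blower_i-\epsilon\le\Bnumlo_i$ follows symmetrically.

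The step I expect to be the main obstacle is precisely the bounded-$(s_k)$ case of \eqref{eq:bapprox2}: since $\Bnumup_i$ is an extremum over \emph{all} start times $s\ge 0$ whereas $\Bupper_i$ only feels windows whose start time escapes to infinity, equality cannot be expected at finite $H$ --- this is exactly the phenomenon behind the authors' example $b(\tau)=1/(1+\tau)$. The device that rescues the finite-$H$ estimate is the observation that a very wide window anchored at a bounded start time agrees, up to an asymptotically negligible prefix, with the ``late'' window $[N_k,s_k+H_k]$, whose average is therefore controlled by $\Bupper_i$; once this is recognized, only routine bookkeeping of the $o(1)$ terms and of the signs (since $\Bupper_i$, $\Blower_i$ may be negative) remains.
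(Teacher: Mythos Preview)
Your argument is correct and proceeds by a genuinely different route from the paper's. For \eqref{eq:bapprox2} you work purely with running averages: a contradiction assumption produces a sequence of windows with uniformly too-large averages, and after (in the bounded-start case) trimming off an asymptotically negligible prefix so that the start time escapes to infinity, you obtain a contradiction with the double $\limsup$ defining $\Bupper_i$. The paper instead leverages the dichotomy spectrum: since $\Bupper_i+\epsilon$ lies to the right of the spectral interval $[\Blower_i,\Bupper_i]$, the shifted scalar system $\dot\eta=(b_{ii}(t)-\Bupper_i-\epsilon)\eta$ has an exponential dichotomy with the identity projection, i.e., is uniformly exponentially stable, which is equivalent to an integral bound of the form $\int_{t_0}^{t}(\Bupper_i+\epsilon-b_{ii}(s))\,ds\ge\alpha(t-t_0)-d$ valid for \emph{all} $t\ge t_0\ge 0$. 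Dividing by $H$ and choosing $H_0$ so that $d/H_0<\alpha/2$ gives $\Bnumup_i<\Bupper_i+\epsilon$ in one stroke, with no case distinction. The paper's path is shorter and even yields the slightly sharper $\Bnumup_i\le\Bupper_i+\epsilon-\alpha/2$, but it draws on the spectral theory of \Cref{def:dichotomyspectrum}; your approach is entirely self-contained, using only the boundedness $|b_{ii}(t)|\le M$, and as a bonus you supply your own proof of \eqref{eq:bapprox1} via the block decomposition rather than citing \cite{Dieci2003}.
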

The proof is given in the appendix.
Averaging the $b_{ii}$, $i=1,\ldots,n$, over a sufficiently large time interval $H$ allows to approximate the spectral intervals theoretically with arbitrary accuracy.
In particular, Theorem~\ref{thm:approximation} implies that in the limit, it holds that
{$\Bupper_i = \lim_{H\rightarrow \infty} \sup_{t_0} \frac{1}{H}\integ{\tau}{t_0}{t_0+H}{b_{ii}(\tau)}.$
	Detailed considerations and numerical examples for the approximation of the Lyapunov exponents (also for non-regular systems) and the exponential dichotomy spectral intervals can be found in~\cite{Dieci2007}.
	
	\subsection{Reduced QR Decomposition}\label{sec:reducedQR}
	If not all spectral intervals are of interest, it may suffice to compute only a part of the spectrum.
	This allows a reduction of the computational complexity by utilizing properties of the QR-decomposition~\cite{tranninger2020detectability}.
	For the reduced QR decomposition, only the first $k\leq n$ columns of an ordered normal Lyapunov basis $\X$ are considered.
	This results in the differential equations for the reduced QR decomposition
	\begin{subequations}
		\begin{align}
			\dot{\R}_1 &= \B_{1}(t)\R_1, \quad \R_1(0)\in\mathds{R}^{k\times k}, \label{eq:redQR1}\\
			\dot{\bar \Q} &=(\I_n-\bar \Q \bar \Q^\transp)\A(t)\bar \Q +\bar \Q \S_1(t), \quad {\bar \Q(0) \in\mathds{R}^{n\times k},\quad\text{with}}\label{eq:redQR2}   \\
			\B_{1}(t) &=\bar \Q^\transp(t) \A(t) \bar \Q(t)- \S_1(t)\label{eq:b1}.
		\end{align}
		The elements $s_{ij}$ of the skew-symmetric $k\times k$-Matrix $\S_1$ are given by
		$s_{ij}(t) = \bar \q_i^\transp(t) \A(t) \bar \q_j(t)$
		for $i>j$~\cite{frank2018adetectability,tranninger2020detectability}.
	\end{subequations}
	Again, only the solution of~\eqref{eq:redQR2} is required for the approximation of $k$ spectral intervals via the diagonal entries of $\B_{1}$.
	For this, it is crucial to preserve orthogonality of the columns of $\bar \Q$.
	This can be achieved by a projected integration algorithm, which is a standard integration scheme (e.g., a fourth-order Runge-Kutta algorithm) in combination with an orthogonalization procedure based, e.g., on the modified Gram-Schmidt algorithm \changed{as} presented in~\cite{Dieci1994}.
	
	If the initial condition \mbox{$\bar \X(0) = \bar \Q(0) \R_{1}(0)$} is chosen as the first $k$ columns of an ordered normal Lyapunov basis, the $k$ largest Lyapunov exponents can be obtained from the diagonal of $\B_{1}$.
	Typically, such a basis is not known a priory and determining such a basis may be cumbersome~\cite{Dieci2008,Dieci2007}.
	Hence, it is proposed in~\cite{Dieci2008} to choose $\bar\Q(0)$ as random orthogonal matrix in practice.
	This choice proved to be successful in many numerical investigations~\cite{frank2018adetectability,tranninger2020detectability,tranninger2020efficient} and is theoretically supported by~\cite{benettin1980Lyapunov}.

\section{Uniform Detectability of Linear Time-Varying Systems}\label{sec:EDObserverdesign}
This section deals with the existence of an observer for systems of the form~\eqref{eq:sys:LTV}, i.e.,
\begin{align}
	\dot\x &= \A(t)\x,\quad \y = \C(t)\x \label{eq:sys:ltvwithoutput}
\end{align}
with the output $\y(t)\in\mathds{R}^p $. 
The output matrix function $\C:\changed{\mathds R_{\geq 0}} \rightarrow \mathds R^{p\times n}$ is assumed to be continuous and bounded.
The goal is to design an observer
\begin{equation}\label{eq:observer}
	\dot{\hat\x} = \A(t)\hat\x + \L(t)\left[\y-\C(t)\hat\x\right],
\end{equation}
where $\L:\changed{\mathds R_{\geq 0}} \rightarrow \mathds R^{n\times p}$ is a design parameter.
The estimation error $\e(t)=\x(t)-\hat\x(t)$ is governed by
\begin{equation}\label{eq:errordynamics}
	\dot \e = \left[\A(t)-\L(t)\C(t)\right]\e.
\end{equation}
The following detectability definition \changed{introduced} in~\cite{wonham1968onamatrix,ravi1992normalized}, is strongly related to the stability properties of~\eqref{eq:errordynamics}.
\begin{definition}\label{def:detectability}
	System~\eqref{eq:sys:ltvwithoutput} is called uniformly exponentially detectable, if there exists a bounded $\L:\changed{\mathds R_{\geq 0}} \rightarrow\mathds R^{n\times p}$ such that the estimation error dynamics~\eqref{eq:errordynamics} is uniformly exponentially stable.
\end{definition}
In other words, uniform exponential detectability or short \emph{detectability} is equivalent to the existence of an observer with a uniformly exponentially stable estimation error dynamics. 
A stronger concept which also guarantees the existence of an observer is uniform complete observability.
It is characterized by the constructibility Gramian
\begin{equation}
	\N(t_1,t_0) = \integ{s}{t_0}{t_1}{\mtPhi^\transp(s,t_1)\C^\transp(s)\C(s)\mtPhi(s,t_1)}
\end{equation}
according to
\begin{definition}[uniform complete observability]
	System \eqref{eq:sys:ltvwithoutput}, or equivalently the pair $(\A(t),\C(t))$, is called uniformly completely observable, if there exist positive constants $\alpha_1$, $\alpha_2$ and $\changed{T}$, such that
	$	\alpha_1\I_n \preceq \N(t_0+\changed{T},t_0) \preceq \alpha_2\I_n$
	holds for all $t_0\in\changed{\mathds R_{\geq 0}}$.
\end{definition}
It is shown in~\cite{bucy1972thericcati} that uniform complete observability implies the existence of a unique and uniformly bounded positive definite solution of the observer Riccati equation
\begin{equation}\label{eq:fullriccati}
	\dot \P = \A(t)\P + \P \A^\transp(t) - \P \C^\transp(t) \C(t) \P + \G(t).
\end{equation}
Here, $\P(0)\succ 0$ and $\G(t)=\G^\transp(t)$ with $ g_1\I_n\preceq \G(t) \preceq g_2\I_n$ and $0<g_1\leq g_2$. 
Moreover, choosing the feedback gain in~\eqref{eq:errordynamics} as $\L(t)=\P(t)\C^\transp(t)$ renders the error dynamics uniformly exponentially stable~\cite{bucy1967global} and hence, uniform complete observability is sufficient for uniform exponential detectability.

Uniform complete observability is a strong system requirement and solving the differential Riccati equation~\eqref{eq:fullriccati} might be computationally demanding, especially for systems with a large system order.
If the system, however, possesses only a few unstable modes, i.e., only a small number of non-negative upper Bohl exponents, it is reasonable to modify only these modes in the estimation error dynamics.
This allows to solve the Riccati equation on a reduced order subspace.
This strategy is also pursued in~\cite{palatella2013Lyapunov,tranninger2020detectability} for non-negative (regular) Lyapunov exponents and is the underlying idea of the following detectability condition and observer design technique.

\begin{theorem}[uniform detectability]\label{thm:detectability:ltv}
	Consider the linear time-varying system~\eqref{eq:sys:ltvwithoutput}.
	Let $\Q:\changed{\mathds R_{\geq 0}} \rightarrow \mathds R^{n\times n}$ be the solution to~\eqref{eq:fullqr2} of the continuous QR-decomposition of a corresponding fundamental solution $\X$ with $\X(t_0)=\Q(t_0)\R(t_0)$.
	Let $j^\star$ be the first integer $0\leq j^\star \leq n$ such that $\Bupper(\q_i(t)^\transp \A(t)\q_i(t))< 0$ for $i=j^\star+1,\ldots,n$ and let $\Q(t)$ be partitioned according to $\Q(t) = \left[\bar \Q(t) \;\; \bar \Q_\perp(t) \right]$, where $\bar \Q(t)\in\mathds R^{n\times j^\star}$.
	
	System~\eqref{eq:sys:ltvwithoutput} is detectable, if the pair $(\B_1(t),\C(t)\bar \Q(t))$ with $\B_1$ as in~\eqref{eq:b1} and $k=j^\star$ is uniformly completely observable.
	In particular, there exists a bounded positive definite solution to the differential Riccati equation
	\begin{equation}\label{eq:Riccati_LTV}
		\dot \P\changed{_1} = \B_1(t)\P\changed{_1} + \P\changed{_1}\B_1(t) - \P\changed{_1}\bar\Q^\transp(t) \C^\transp(t)\C(t)\bar\Q(t) + \G\changed{_1}(t),
	\end{equation}
	with the $j^\star \times j^\star$ matrix $\P\changed{_1}(t_0)\succ 0$ and positive constants $g_1$, $g_2$ such that $g_1\I_{j^\star}\preceq \G\changed{_1}(t) \preceq g_2\I_{j^\star}$.
	Moreover, the error system~\eqref{eq:errordynamics} with
	\begin{equation}\label{eq:redgain}
		\L(t)=\bar\Q(t)\P\changed{_1}(t)\bar\Q\changed{_1}^\transp(t)\C^\transp(t)
	\end{equation}
	is uniformly exponentially stable.		
\end{theorem}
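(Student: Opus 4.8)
The plan is to use the orthogonal Lyapunov transformation $\Q(t)$ coming from the continuous QR decomposition to bring the error dynamics~\eqref{eq:errordynamics} into \emph{block upper triangular} form, in which the ``stable'' diagonal block is automatically uniformly exponentially stable while the ``unstable'' diagonal block is stabilized by the reduced Riccati gain~\eqref{eq:redgain}; uniform exponential stability then follows from a cascade argument and is transported back to the original coordinates because $\Q$ is a Lyapunov transformation.

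\textbf{Step 1 (triangularization).} I would set $\tilde\e = \Q^\transp(t)\e$. Since $\Q$ is orthogonal, $\|\Q\| = \|\Q^{-1}\| = 1$, and $\dot\Q = \Q\S$ with $\S$ bounded (its entries $\q_i^\transp\A\q_j$ being bounded because $\A$ is), $\Q$ is a Lyapunov transformation and therefore preserves uniform exponential stability. Using $\dot\Q^\transp = -\S\Q^\transp$ together with $\Q^\transp\A\Q = \B + \S$ from~\eqref{eq:fullqr1}, the transformed error dynamics read $\dot{\tilde\e} = [\B(t) - \Q^\transp(t)\L(t)\C(t)\Q(t)]\tilde\e$. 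With the partition $\Q = [\bar\Q\;\;\bar\Q_\perp]$ the matrix $\B$ is upper triangular and splits into the $j^\star\times j^\star$ block $\B_1$ (which coincides with~\eqref{eq:b1} for $k = j^\star$), an off-diagonal block $\B_{12}$, and a lower-right block $\B_2$ whose diagonal entries are $\q_i^\transp\A\q_i$ for $i > j^\star$ (the skew-symmetric $\S$ contributes nothing on the diagonal). Since the gain~\eqref{eq:redgain} has the structure $\L = \bar\Q\P_1(\C\bar\Q)^\transp$, the orthogonality relations $\bar\Q^\transp\bar\Q = \I_{j^\star}$ and $\bar\Q_\perp^\transp\bar\Q = \vc{0}$ make $\Q^\transp\L\C\Q$ act only on the first block row, so the closed-loop matrix $\B - \Q^\transp\L\C\Q$ is again block upper triangular with diagonal blocks $\B_1 - \P_1(\C\bar\Q)^\transp(\C\bar\Q)$ and $\B_2$ and with a bounded off-diagonal block $\B_{12} - \P_1\bar\Q^\transp\C^\transp\C\bar\Q_\perp$.

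\textbf{Step 2 (stability of the diagonal blocks).} The subsystem $\dot\z = \B_2(t)\z$ is upper triangular with all diagonal Bohl exponents $\Bupper(\q_i^\transp\A\q_i) < 0$ for $i > j^\star$; by \cite[Proposition 5]{Doan2016} its dichotomy spectrum is contained in $\bigcup_{i>j^\star}[\Blower_i,\Bupper_i] \subset (-\infty,0)$, hence this subsystem is uniformly exponentially stable. For the top-left block, $(\B_1,\C\bar\Q)$ is uniformly completely observable by hypothesis and $\B_1$, $\C\bar\Q$ are bounded; applying the results of~\cite{bucy1972thericcati,bucy1967global} to this pair (in place of $(\A,\C)$) gives a unique, uniformly bounded positive definite solution $\P_1$ of~\eqref{eq:Riccati_LTV} and shows that the gain $\P_1(\C\bar\Q)^\transp$ renders $\B_1 - \P_1(\C\bar\Q)^\transp(\C\bar\Q)$ uniformly exponentially stable.

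\textbf{Step 3 (cascade, conclusion, and main obstacle).} With both diagonal blocks uniformly exponentially stable and the coupling block bounded, the block triangular system for $\tilde\e$ is uniformly exponentially stable by a standard cascade argument: the lower block decays uniformly exponentially and then enters the upper, uniformly exponentially stable block as an exponentially decaying perturbation; equivalently, a weighted sum of the two blocks' Lyapunov functions from \Cref{thm:LyapLTV}, with a sufficiently large weight on the stable block, is a Lyapunov function for the whole system. Undoing the Lyapunov transformation $\Q$ then gives uniform exponential stability of~\eqref{eq:errordynamics}, and~\eqref{eq:redgain} is bounded because $\bar\Q$, $\P_1$, $\C$ are, so \Cref{def:detectability} yields detectability. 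I expect the crux to be Step 2 — deducing \emph{uniform} (and not merely asymptotic) exponential stability of the triangular block $\B_2$ solely from negativity of its diagonal Bohl exponents, which is exactly where the continuous QR triangularization and the dichotomy spectrum characterization of~\cite{Doan2016} are essential; the cascade of Step 3 is routine but genuinely relies on the boundedness of $\P_1$ supplied by uniform complete observability.
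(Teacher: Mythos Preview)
Your proposal is correct and follows essentially the same route as the paper: transform via the orthogonal Lyapunov transformation $\Q$ to block upper triangular form, show the lower block $\B_2$ is uniformly exponentially stable from negativity of its diagonal upper Bohl exponents, stabilize the upper block $\B_1$ with the reduced Riccati gain using uniform complete observability of $(\B_1,\C\bar\Q)$, and conclude by a cascade/vanishing-perturbation argument. The paper is terser---it simply asserts the $\B_2$ stability and cites \cite[Theorem~2]{zhou2016onasymptotic} for the cascade step---while you spell out the gain-structure computation and invoke \cite{Doan2016} for the dichotomy spectrum of $\B_2$, but the underlying argument is the same.
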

\begin{proof}
	Based on the Lyapunov transformation $\e(t) = \Q(t)\e_z(t)$, the error system~\eqref{eq:errordynamics} can be transformed to the block upper triangular form
	\begin{equation}
		\begin{bmatrix} \dot \e_{z,1}\\ \dot \e_{z,2}\end{bmatrix} = \begin{bmatrix}
			\B_1(t) & \B_{12}(t) \\
			\bm 0 & \B_2(t) 
		\end{bmatrix}	\begin{bmatrix} \e_{z,1}\\ \e_{z,2}\end{bmatrix} - \begin{bmatrix}
			\L_1(t) \\ \bm 0
		\end{bmatrix} \C(t) \begin{bmatrix}
			\bar \Q(t) & \bar \Q_\perp(t)
		\end{bmatrix}\begin{bmatrix} \e_{z,1}\\ \e_{z,2}\end{bmatrix}
	\end{equation}
	with $\L_1(t) = \P\changed{_1}(t)\bar\Q^\transp(t)\C^\transp(t)$.
	The subsystem $\dot\e_{z,2}=\B_2(t)\e_{z,2}$ is uniformly exponentially stable, because all upper Bohl exponents are negative.
	Hence, $\e_{z,2}(t)$ can be seen as a uniformly exponentially vanishing perturbation in the system
	\begin{equation}
		\dot \e_{z,1} = \left[\B_1(t) -\L_1\C(t)\bar \Q(t)\right] \e_{z,1} + \left[\B_{12}(t)-\L_1\C(t)\bar \Q_\perp(t)\right]\e_{z,2}
	\end{equation}
	Uniform complete observability of the pair $(\B_1(t),\C(t)\bar \Q(t))$ implies that there exists a uniformly bounded positive definite solution to~\eqref{eq:Riccati_LTV} and also implies that $\dot \e_{z,1} = \left[\B_1(t) -\L_1\C(t)\bar \Q(t)\right] \e_{z,1}$ is uniformly exponentially stable.
	Uniform exponential stability of the overall error system and hence detectability then follows from~\cite[Theorem 2]{zhou2016onasymptotic}.
\end{proof}
\begin{remark}
	For the practical implementation of the observer~\eqref{eq:observer} with the gain~\eqref{eq:redgain}, it suffices to solve the differential equation~\eqref{eq:redQR2} for some $k\geq j^\star$ instead of the (full order) differential equation~\eqref{eq:fullqr2}.
	This could drastically reduce the complexity, if $j^\star$ is small compared to the system order $n$.
	The parameter $k$ determines the number of upper Bohl exponents, which are modified by the observer feedback gain. 
	If system~\eqref{eq:sys:ltvwithoutput} is uniformly completely observable, $k$ can be regarded as a tuning parameter and allows a trade-off between the computational complexity and the convergence speed of the error system~\cite{tranninger2020detectability}.
	In particular, $k$ determines how many spectral intervals of the system are modified in the observer error dynamics. 
\end{remark}
In general, the detectability condition in~\Cref{thm:detectability:ltv} is not necessary, even if $\X_0$ is an ordered normal Lyapunov basis, as demonstrated by an example provided in~\cite{tranninger2021beobachterentwurf} \changed{where the upper Bohl exponents are not ordered.}
In the numerical simulations, it turned out that a certain ordering of the upper Bohl exponents similar to the Lyapunov exponents can be expected as shown in~\Cref{sec:example}. 
Moreover, if the system has an exponential dichotomy and $\X_0$ is an ordered normal Lyapunov basis, the detectability condition in~\Cref{thm:detectability:ltv} turns out to be necessary and sufficient~\cite{tranninger2020uniform}.

\changed{
	It is now shown, that the reduced order Riccati equation~\eqref{eq:Riccati_LTV} can be obtained from a projection of the full order Riccati equation with properly chosen $\P(t_0)$ and $\G(t)$.
	To that end, it is assumed that system~\eqref{eq:sys:LTV} is transformed to an upper triangular form with $\x = \Q(t)\z$ and partitioned according to $\Q(t) = \left[\bar \Q(t)\;\bar\Q_\perp(t) \right]$. 
	The full-order Riccati equation is then given by
	\begin{align}\label{eq:Riccati_LTV_full}
		&\begin{bmatrix}
			\dot \P_1 & \dot \P_{12} \\
			\dot \P_{12}^\transp &  \dot \P_2 
		\end{bmatrix} = \begin{bmatrix}
			\B_1 & \B_{12} \\
			\bf 0 & \B_2 
		\end{bmatrix}	\underbrace{\begin{bmatrix}
				\P_1 &  \P_{12} \\
				\P_{12}^\transp & \P_2 
		\end{bmatrix}}_{\P(t)} + \begin{bmatrix}
			\P_1 &  \P_{12} \\
			\P_{12}^\transp & \P_2 
		\end{bmatrix} \begin{bmatrix}
			\B^\transp_1 & \bf 0 \\
			\B^\transp_{12} & \B^\transp_2
		\end{bmatrix} \\
		&- \begin{bmatrix}
			\P_1 &  \P_{12} \\
			\P_{12}^\transp & \P_2 
		\end{bmatrix} \begin{bmatrix}
			\bar\Q^\transp \\
			\bar\Q_\perp^\transp 
		\end{bmatrix} \C^\transp \C \begin{bmatrix}
			\bar \Q& \bar \Q_\perp
		\end{bmatrix} \begin{bmatrix}
			\P_1 &  \P_{12} \\
			\P_{12}^\transp & \P_2 
		\end{bmatrix} + \begin{bmatrix}
			\G_1 & \G_{12} \\
			\G_{12}^\transp & \G_2
		\end{bmatrix},     \nonumber
	\end{align}
	where the time indices are omitted for the sake of readability.
	\begin{lemma}[Reduced Riccati equation]\label{le:Riccatireduced}
		Let $\P_1(t)$ be the solution of the differential Riccati equation~\eqref{eq:Riccati_LTV} and
		\begin{equation}\label{eq:Riccati_IC}
			\P(t_0) = \begin{bmatrix}
				\P_1(t_0) & \bf 0\\
				\bf 0 &\bf 0
			\end{bmatrix},\; \G_{12} = {\bf 0} \;\text{and}\;\; {\G}_2 = \bf{0}.
		\end{equation}
		Then, the solution of~\eqref{eq:Riccati_LTV_full} is given by
		\begin{equation}
			\P(t)	=\begin{bmatrix}
				\P_1(t) & \bm 0 \\
				\bm 0 & \bm 0
			\end{bmatrix} \quad \text{for all } t\geq t_0.
		\end{equation}
	\end{lemma}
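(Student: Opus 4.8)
The plan is to prove the lemma by a direct verification together with uniqueness of solutions of the matrix Riccati differential equation. I would take the block‑diagonal matrix
\[
\widetilde\P(t) = \begin{bmatrix} \P_1(t) & \bm 0 \\ \bm 0 & \bm 0 \end{bmatrix}
\]
with $\P_1$ the solution of the reduced equation~\eqref{eq:Riccati_LTV} as an \emph{ansatz}, and show that it solves the initial value problem associated with the full‑order Riccati equation~\eqref{eq:Riccati_LTV_full} under the choices~\eqref{eq:Riccati_IC}. Since the right‑hand side of~\eqref{eq:Riccati_LTV_full} is quadratic, hence locally Lipschitz, in the entries of $\P$, the solution of that initial value problem is unique on its interval of existence; moreover $\P_1$, and therefore $\widetilde\P$, exists and is bounded on all of $\mathds{R}_{\geq 0}$ (as guaranteed in \Cref{thm:detectability:ltv}). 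Consequently $\widetilde\P$ \emph{is} the solution of~\eqref{eq:Riccati_LTV_full}, which is the assertion.

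The verification splits into two routine steps. First, the initial condition: by construction $\widetilde\P(t_0)$ coincides with the right‑hand side of~\eqref{eq:Riccati_IC}. Second, one substitutes $\widetilde\P$ into~\eqref{eq:Riccati_LTV_full} and exploits the block‑triangular structure of the transformed coefficient matrix together with $\G_{12}=\bm 0$ and $\G_2=\bm 0$. Because the bottom block row of $\widetilde\P$ vanishes, the product of $\begin{bmatrix}\B_1 & \B_{12}\\ \bm 0 & \B_2\end{bmatrix}$ with $\widetilde\P$ has only the nonzero $(1,1)$ block $\B_1\P_1$; likewise $\widetilde\P$ times $\begin{bmatrix}\B_1^\transp & \bm 0\\ \B_{12}^\transp & \B_2^\transp\end{bmatrix}$ contributes only $\P_1\B_1^\transp$ in the $(1,1)$ block; and the quadratic term collapses to $\P_1\bar\Q^\transp\C^\transp\C\bar\Q\P_1$ in the $(1,1)$ block, all off‑diagonal and $(2,2)$ contributions vanishing because they are pre‑ or post‑multiplied by the zero blocks of $\widetilde\P$. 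Thus~\eqref{eq:Riccati_LTV_full} reduces to the identity $\dot\P_1 = \B_1\P_1 + \P_1\B_1^\transp - \P_1\bar\Q^\transp\C^\transp\C\bar\Q\P_1 + \G_1$ in the $(1,1)$ block and to $\bm 0 = \bm 0$ in the remaining blocks, i.e.\ precisely~\eqref{eq:Riccati_LTV}, which $\P_1$ satisfies by hypothesis.

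I do not anticipate a genuine obstacle; the only point needing care is the appeal to uniqueness. The Picard–Lindelöf theorem yields uniqueness only locally, so strictly one argues that any solution of the full initial value problem agrees with $\widetilde\P$ on the maximal interval on which both are defined, and then notes that, since $\widetilde\P$ is defined on all of $[t_0,\infty)$, the maximal solution of~\eqref{eq:Riccati_LTV_full} is defined (at least) there and coincides with $\widetilde\P$. A secondary bookkeeping remark is that the terms $\P_1\B_1$ and $\P_1\bar\Q^\transp\C^\transp\C\bar\Q$ in~\eqref{eq:Riccati_LTV} are to be read consistently with the full‑order equation~\eqref{eq:fullriccati}, namely as $\P_1\B_1^\transp$ and $\P_1\bar\Q^\transp\C^\transp\C\bar\Q\P_1$, so that the $(1,1)$‑block identity produced by the substitution matches~\eqref{eq:Riccati_LTV} verbatim.
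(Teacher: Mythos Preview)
Your proposal is correct and follows essentially the same approach as the paper: the paper's proof is a one-line remark that the statement follows by substituting $\P(t)$ into~\eqref{eq:Riccati_LTV_full} and using~\eqref{eq:Riccati_LTV} and~\eqref{eq:Riccati_IC}. Your version is simply a more explicit rendition of that substitution, and your additional remarks on uniqueness via local Lipschitz continuity of the Riccati right-hand side, as well as on the evident typos in~\eqref{eq:Riccati_LTV}, are accurate and appropriate.
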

	\begin{proof}
		The statement follows by substituting $\P(t)$ into~\eqref{eq:Riccati_LTV_full} and using~\eqref{eq:Riccati_LTV} and~\eqref{eq:Riccati_IC}.
	\end{proof}
} %

\section{Nonlinear Observer Design}\label{sec:NLobsv}
This section extends the observer design approach presented in~\Cref{sec:EDObserverdesign} to nonlinear systems.
As the Riccati differential equation is solved on a reduced state-space only, the ideas for the convergence proof of the extended Kalman-Bucy filter cannot be applied in the present setting.
However, the local stability result utilizes the Lyapunov function from~\Cref{thm:LyapLTV} and hence, under some assumptions stated subsequently, local uniform exponential stability of the observer error dynamics can be guaranteed. 

In the following, the nonlinear system~\eqref{eq:NL}, i.e.,
\begin{subequations}\label{eq:sysNLall}
	\begin{align}
		\dot \x &= \f(\x,\u), \,t\in\changed{\mathds R_{\geq 0}},\;\x(t_0)=\x_0,  \label{eq:sys_nonlinear}\\
		\y &= \C(t) \x\label{eq:sys_nonlinear:output}
	\end{align}	
\end{subequations}
is considered with the output equation~\eqref{eq:sys_nonlinear:output}.
Again, it is assumed that the output matrix function $\C:\changed{\mathds R_{\geq 0}}\rightarrow \mathds R^{p\times n}$ is bounded.
It should be remarked that the case of a linear measurement equation is used to demonstrate the key idea. 
An extension to a nonlinear output map is possible by a straightforward application of the concepts presented in~\cite{KONRAD1998}.

\subsection{Extended Kalman-Bucy Filter as a Deterministic Observer}\label{sec:EKBF}

The idea of the deterministic interpretation of the extended Kalman-Bucy filter is to obtain the time-varying matrices $\A(t)$ and $\C(t)$ by a linearization along the estimated trajectory~\cite{KONRAD1998}. 
The algorithm can be summarized as
\begin{subequations}\label{eq:EKBF}
	\begin{align}
		\changed{\dot{\hat \x}}  &= \f(\hat \x, \u)  + \P(t)\C^\transp(t)\left[\y - \C(t)\hat \x\right],\quad\hat \x(0) = \hat \x_0,\\
		\dot \P &= \P\A^\transp(t) + \A(t) \P - \P \C^\transp(t)\C(t) \P + \G(t), \quad \P(t)\in\mathds R^{n\times n}\label{eq:RiccatiEKBF}\\
		\A(t) &= \left.\pderiv{\phantom.}{\x} \f(\x(t),\u(t))\right|_{(\hat \x(t),\u(t))}.%
	\end{align}
\end{subequations}
The initial condition $\P_0$ is chosen as a positive definite matrix and the matrix $\G(t)$ is a positive definite tuning parameter.

Local stability results for the dynamics of the estimation error $\e(t) =\x(t)-\hat \x(t)$ are presented in~\cite{KONRAD1998,krener2003theconvergence,Bonnabel2015}.
A key assumption of these stability proofs is the existence of positive constants $p_1,\,p_2$ such that $p_1\I_n\preceq \P(t) \preceq p_2 \I_n$ holds for~\eqref{eq:RiccatiEKBF}.
This assumption can be guaranteed by uniform complete observability of the observer trajectory, which, however, might not be a trajectory of the original system in general.
This assumption cannot be dropped easily as extensively discussed in~\cite{Bonnabel2015,Karvonen2018}.

\subsection{Extended Subspace Observer}\label{sec:ESO}
Based on the ideas proposed in~\Cref{thm:detectability:ltv} and in analogy to the extended Kalman-Bucy filter, an observer for a class of nonlinear systems is presented in the following.
The proposed algorithm is summarized as
\begin{subequations}\label{eq:ESOnonlinear}
	\begin{align}
		\dot{\hat\x} &= \f(\hat\x,\u) + \L(t)\left[\y-\C(t)\hat\x\right], \\
		\dot{\P}\changed{_1} &= \B_{1}(t) \P\changed{_1} + \P\changed{_1} \B_{1}^\transp(t) - \P\changed{_1} \bar \C^\transp(t) \bar \C(t) \P + \G\changed{_1}(t),\;\quad \P\changed{_1}(t)\in\mathds{R}^{k\times k}\label{eq:RiccatiNonlinear}\\
		\dot{\bar \Q} &= \left[\I-\bar \Q \bar \Q^\transp\right]\A(t)\bar \Q +\bar \Q \S_{1}(t),\quad \bar \Q(t)\in\mathds{R}^{n\times k},
	\end{align}
	with 
	\begin{align}
		\L(t) &=\bar \Q(t) \P\changed{_1}(t)\bar \C^\transp(t),\label{eq:NLfeedback} %
		\quad\bar \C(t) = \C(t)\Q(t), \\
		\A(t) &= \left.\pderiv{\phantom.}{\x} \f(\x(t),\u(t))\right|_{(\hat \x(t),\u(t))}\label{eq:NLA(t)}, %
		\quad \B_{1}(t) = \bar \Q^\transp(t) \A(t) \bar \Q(t)- \S_{1}(t).		
	\end{align}
	The elements $s_{ij}$ of the skew symmetric matrix $\S_{1}$ are given by $s_{ij}(t)=\bar \q_i^\transp(t) \A(t) \bar \q_j(t)$ for $i>j$.
\end{subequations}
The number of columns in $\bar \Q(t)$ has to be chosen such that $j^\star \leq k \leq n$ with $j^\star$ as given in~\Cref{thm:detectability:ltv} and $\A(t)$ as in~\eqref{eq:NLA(t)}.
In the following, a convergence result for this observer is provided.

\subsection{Convergence Analysis}

The estimation error $\e(t)= \x(t)-\hat\x(t)$ is governed by
\begin{equation}
	\begin{aligned}\label{eq:nonlinerror1}
		\dot \e = \f(\x,\u)-\f(\hat\x,\u) - \L(t)\left[\y-\C(t)\hat\x\right].
	\end{aligned}
\end{equation}
By substituting $\x$ with $\x =\hat\x+\e$ in~\eqref{eq:nonlinerror1} and performing a Taylor series expansion of $\f(\cdot,\cdot)$ around $\e=\vc 0$ one obtains
$\f(\hat\x + \e, \u) = \f(\hat\x,\u) +\A(t)\e + \bm\eta(\e,\hat\x,\u),$ where $\A(t) = \left.\pderiv{\phantom.}{\x} \f(\x,\u)\right|_{(\hat \x(t),\u(t))}$
and $\bm \eta$ is the remainder of the Taylor series truncated after the linear term.
Hence, the estimation error dynamics can be stated as
\begin{equation}\label{eq:errorperturbed}
	\dot \e = \left[\A(t)-\L(t)\C(t)\right]\e + \bm\eta(\e,\hat\x,\u).
\end{equation}

The following additional assumptions are standard assumptions in the convergence analysis of the extended Kalman-Bucy filter, see, e.g.~\cite{KONRAD1998}.
\begin{enumerate}[label=(a\arabic*)]
	\item\label{as:NL1} The matrix function $\A:\changed{\mathds R_{\geq 0}} \rightarrow\mathds R^{n\times n}$ is bounded
	\item\label{as:NL2} For $\P\changed{_1}(t)$ in~\eqref{eq:RiccatiNonlinear}, there exist positive constants $p_1$ and $p_2$ such that $p_1\I_k \preceq \P\changed{_1}(t) \preceq p_2\I_k$ holds for all $t\in\changed{\mathds R_{\geq 0}}$.
	\item\label{as:NL3} There exist positive constants $\epsilon$ and $\kappa$ such that
	$\|\bm \eta(\e,\x,\u)\|  \leq \kappa \|\e(t)\|^2$
	holds for all $t\in \changed{\mathds R_{\geq 0}}$, $\x,\e\in\mathds R^n$ and $\u\in\mathds R^m$ with $\|\e(t)\|\leq \epsilon$.
\end{enumerate}
\begin{remark}\label{rem:assumptions}
	\changed{
		Assumption~\ref{as:NL1} is e.g. fulfilled if $\f$ is globally Lipschitz continuous in $\x$.
		The lower bound on $\P_1(t)$ in~\ref{as:NL2} is fulfilled for a properly chosen $\G_1(t)$ as shown in the proof of~\Cref{thm:ESOuniformlyobservable}.
		For a class of uniformly observable systems, the upper bound on $\P_1(t)$ moreover holds independently of the specific observer trajectory. 
		This result is stated in detail in~\Cref{sec:uniformlyobservable}.}
	Assumption~\ref{as:NL3} is fulfilled, e.g., if $\f$ is at least two times continuously differentiable and the corresponding Hessian matrix of each component of $\f$ is bounded, see~\cite{KONRAD1998,frank2018adetectability}. %
	Let the components of $\f(\x,\u)=\begin{bmatrix}f_1(\x,\u) & \cdots & f_n(\x,\u) \end{bmatrix}^\transp$ be denoted by $f_i$, $i=1,\,\ldots,\,n$.
	Then, $\kappa$ is given by
	\begin{equation}
		\kappa = \frac{1}{2}\max_{i=1}^n\sup_{\substack{\x\in\mathds R^n\\\u\in\mathds R^m}} \| \H_{f_i}(\x,\u) \| 
	\end{equation}
	with $\H_{f_i}(\x,\u)$ as the Hessian matrix of $f_i$.
\end{remark}

To show local stability of the estimation error dynamics, the standard approach  typically used in the stability analysis for the extended Kalman-Bucy filter, see, e.g.~\cite{KONRAD1998}, cannot be applied, because the Riccati equation is solved only on a reduced-order subspace.
However, a local convergence result is obtained by utilizing the Lyapunov function for linear time-varying systems from~\Cref{thm:LyapLTV}.

\begin{theorem}[extended subspace observer]
	Let a system be given by~\eqref{eq:sysNLall} and the observer for this system by~\eqref{eq:ESOnonlinear}.
	Moreover, let the assumptions (a1) to (a3) hold.
	Then, the estimation error dynamics~\eqref{eq:errorperturbed} resulting from this observer is locally uniformly exponentially stable.
\end{theorem}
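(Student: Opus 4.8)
The plan is to reduce the claim to the linear theory of \Cref{sec:EDObserverdesign}: transform~\eqref{eq:errorperturbed} into the block-triangular coordinates of the proof of~\Cref{thm:detectability:ltv}, establish uniform exponential stability of its linear part, convert this into a quadratic Lyapunov function via~\Cref{thm:LyapLTV}, and then absorb the Taylor remainder $\bm\eta$ as a cubic perturbation on a ball around the origin whose size is independent of $t_0$. Concretely, I would first complete the propagated $\bar\Q(t)$ from~\eqref{eq:ESOnonlinear} to a full orthogonal $\Q(t)=\bigl[\bar\Q(t)\;\bar\Q_\perp(t)\bigr]$ by integrating the full QR equations~\eqref{eq:QRfull} with $\A(t)$ as in~\eqref{eq:NLA(t)}; by~\ref{as:NL1} the skew matrix $\S(t)$, hence $\dot\Q(t)$, is bounded, so $\e=\Q(t)\e_z$ is a Lyapunov transformation and the first $k$ columns are unaffected by the completion. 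Writing $\bar\C(t):=\C(t)\bar\Q(t)$ and $\e_z=[\e_{z,1}^\transp\;\e_{z,2}^\transp]^\transp$ with $\e_{z,1}\in\mathds R^{k}$, the feedback $\L(t)=\bar\Q(t)\P_1(t)\bar\C^\transp(t)$ turns~\eqref{eq:errorperturbed}, exactly as in the proof of~\Cref{thm:detectability:ltv}, into
\[
\dot\e_z=\begin{bmatrix}\B_1-\P_1\bar\C^\transp\bar\C & \B_{12}-\P_1\bar\C^\transp\C\bar\Q_\perp\\ \bm 0 & \B_2\end{bmatrix}\e_z+\Q^\transp(t)\,\bm\eta(\e,\hat\x,\u),
\]
where the $(2,1)$-block vanishes by the same computation as for~\eqref{eq:redQR2} and $\B_1,\B_2,\B_{12}$ are the blocks of $\Q^\transp\A\Q-\S$.

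Next I would prove uniform exponential stability of this linear part. The $(2,2)$-block is handled as in the proof of~\Cref{thm:detectability:ltv}: since $k\ge j^\star$ for the linearization~\eqref{eq:NLA(t)}, the upper-triangular $\B_2$ has only negative diagonal upper Bohl exponents, so $\dot\e_{z,2}=\B_2(t)\e_{z,2}$ is uniformly exponentially stable. For the $(1,1)$-block I would not re-impose uniform complete observability but instead use $W(t,\e_{z,1})=\e_{z,1}^\transp\P_1^{-1}(t)\e_{z,1}$: differentiating along $\dot\e_{z,1}=[\B_1-\P_1\bar\C^\transp\bar\C]\e_{z,1}$ and substituting $\dot\P_1$ from the Riccati equation~\eqref{eq:RiccatiNonlinear} (with $\G_1$ uniformly positive definite) makes all $\B_1$-terms cancel, leaving $\dot W=-\e_{z,1}^\transp\bigl(\bar\C^\transp\bar\C+\P_1^{-1}\G_1\P_1^{-1}\bigr)\e_{z,1}\le-c\,\|\e_{z,1}\|^{2}$ with $c>0$; together with the two-sided bounds on $\P_1$ from~\ref{as:NL2} this gives uniform exponential stability of the $(1,1)$-block. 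Since the off-diagonal block is bounded (by~\ref{as:NL1}, \ref{as:NL2}, and boundedness of $\C$), the whole block-triangular linear system is uniformly exponentially stable by~\cite[Theorem~2]{zhou2016onasymptotic}.

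Then I would apply~\Cref{thm:LyapLTV} to that linear system to obtain a bounded $\mathcal P(t)=\mathcal P^\transp(t)$ with $\pi_1\I_n\preceq\mathcal P(t)\preceq\pi_2\I_n$ such that $V(t,\e_z)=\e_z^\transp\mathcal P(t)\e_z$ satisfies $\dot V\le-\|\e_z\|^{2}$ along the linear flow. Along the full dynamics~\eqref{eq:errorperturbed} the extra term is $2\e_z^\transp\mathcal P(t)\Q^\transp(t)\bm\eta$; using orthogonality of $\Q$, $\|\e\|=\|\e_z\|$, and~\ref{as:NL3} one gets $\dot V\le-\|\e_z\|^{2}+2\pi_2\kappa\|\e_z\|^{3}$ whenever $\|\e_z\|\le\epsilon$. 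Hence on the $t_0$-independent ball $\|\e_z\|\le r:=\min\{\epsilon,(4\pi_2\kappa)^{-1}\}$ one has $\dot V\le-\tfrac12\|\e_z\|^{2}\le-(2\pi_2)^{-1}V$; the sublevel set $\{V\le\pi_1 r^{2}\}$ is forward invariant and contained in that ball, so whenever $\|\e(t_0)\|\le\sqrt{\pi_1/\pi_2}\,r$ the function $V$ decays at a uniform exponential rate, and the two-sided bounds on $\mathcal P$ with $\|\e\|=\|\e_z\|$ yield local uniform exponential stability of~\eqref{eq:errorperturbed}.

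The main obstacle is the second step, in two respects. First, for the $(1,1)$-block one must upgrade the mere two-sided Riccati bounds~\ref{as:NL2} to a genuine uniform exponential decay without the uniform complete observability of~\Cref{thm:detectability:ltv}; this is exactly the $\P_1^{-1}$/Riccati cancellation above, and it is why the lower Riccati bound — equivalently, a uniformly positive definite $\G_1$ — cannot be dropped. Second, and more conceptually, uniform exponential stability of the complementary block $\B_2$ rests on the design choice $k\ge j^\star$ for the linearization~\eqref{eq:NLA(t)} along the \emph{a priori unknown} observer trajectory, so one must argue this choice is well posed along the realized trajectory. A routine but necessary point is keeping the radius $r$ and the attraction estimate $\sqrt{\pi_1/\pi_2}\,r$ independent of $t_0$, so that the conclusion is genuinely \emph{uniform} rather than merely pointwise exponential stability.
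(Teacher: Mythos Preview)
Your proposal is correct and takes essentially the same route as the paper: show that the linear part of~\eqref{eq:errorperturbed} is uniformly exponentially stable, invoke~\Cref{thm:LyapLTV} to obtain a quadratic Lyapunov function, and then absorb $\bm\eta$ via~\ref{as:NL3} on a $t_0$-independent ball. The paper compresses your explicit block-triangular and $\P_1^{-1}$-Lyapunov analysis into a single appeal to~\Cref{thm:detectability:ltv} together with~\ref{as:NL1} and~\ref{as:NL2} (your Riccati argument is exactly how~\ref{as:NL2} substitutes for the uniform complete observability hypothesis there), and it likewise addresses your $k\ge j^\star$ concern in a remark following the proof.
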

\begin{proof}
	\changed{According to~\Cref{thm:detectability:ltv} and due to assumptions (a1) and (a2), it follows that} the linear time-varying system
	$\dot \e = \left[\A(t) - \L(t)\C(t)\right]\e$
	is uniformly exponentially stable with the feedback gain $\L(t)$ as in~\eqref{eq:NLfeedback}.
	The state transition matrix of this (unperturbed) system can then be bounded by
	\begin{equation}\label{eq:boundPhinonlinear}
		\|\mtPhi(t_1,t_0)\|\leq Ke^{-\gamma (t_1-t_0)}
	\end{equation}
	for some positive constants $K\geq 1$ and $\gamma>0$. 
	Moreover, according to~\Cref{thm:LyapLTV}, there exists a Lyapunov function $V(t,\e) = \e^\transp \PL(t) \e$ with positive constants $\bar p_{1}$ and $\bar p_{2}$ and a positive definite $n\times n$ matrix $\PL(t)$ such that $\bar p_{1}\I_n \preceq \PL(t) \preceq \bar p_{2}\I_n$.
	The matrix $\PL(t)$ is the unique positive definite solution of
	\begin{equation}
		\dot \P_\mathrm{L}+ \A^\transp_\mathrm{e}(t)\PL + \PL\A_\mathrm{e}(t) + \Q_\mathrm{L}(t) = \vc 0
	\end{equation}
	with $\A_\mathrm{e}(t) = \A(t)-\L(t)\C(t)$ and $\Q_\mathrm{L}(t)$ as any positive definite matrix bounded by positive constants $q_1$ and $q_2$ such that $q_{1}\I_n \preceq \Q_\mathrm{L}(t) \preceq q_2\I_n$.
	
	The function $V(t,\e)=\e^\transp \PL(t) \e(t)$ is now used as a Lyapunov function candidate for the perturbed error system~\eqref{eq:errorperturbed}.
	\changed{Let $\epsilon$ be given as in assumption (a3).}
	\changed{Then, for $\|\e\|\leq \epsilon$}, the time derivative along the trajectory can be obtained according to
	\begin{equation}
		\begin{aligned}
			\dot V(t,\e) &= \dot \e^\transp \PL \e + \e^\transp \dot \P_\mathrm{L} \e +\e^\transp \P_L \dot \e= -\e^\transp \Q_\mathrm{L} \e + 2\bm\eta(\e,\hat\x,\u)^\transp \PL \e \\
			&\leq -q_1\|\e\|^2+ 2 \bar p_{2}\|\bm\eta\| \|\e(t)\| \leq -q_1\|\e\|^2+2 \bar p_{2} \kappa \|\e\|^3
			\leq (-q_1+2\bar p_2 \kappa \|\e\|)\|\e\|^2 
		\end{aligned}
	\end{equation}
	For $\|\e\|\leq \min\left(\frac{q_1}{4\bar p_2\kappa},\epsilon\right)$, it holds that
	\begin{equation}\label{eq:boundderiv}
		\dot V(t,\e) \leq -\frac{1}{2}q_1\|\e\|^2.
	\end{equation} 
	According to~\cite[Theorem 4.10]{khalil2002nonlinear}, the error dynamics is locally uniformly exponentially stable because of~\eqref{eq:boundderiv} and $\bar p_{1} \|\e\|^2 \leq V(t,\e) \leq \bar p_{2}\|\e\|^2$.
\end{proof}

To further investigate the norm constraint on the initial error, it is now assumed for simplicity that $\epsilon \geq \frac{q_1}{4\bar p_2 \kappa}$ and that $\Q_\mathrm{L}(t) = q\I_n$, i.e., $q_1=q_2=q$.
Then, item~\ref{item:p2bound} of~\Cref{thm:LyapLTV}, i.e.,
$\bar p_2 \leq \frac{K^2 q}{2\gamma}$
with $K$ and $\gamma$ as in~\eqref{eq:boundPhinonlinear} can be used to simplify the constraint on the initial error according to
\begin{equation}
	\|\e(t_0)\| \leq \frac{q_1}{4\bar p_2\kappa} \leq \frac{\gamma}{2K^2\kappa}
\end{equation}
Here, $\gamma$ is the exponential decay rate.
For the proposed observer, $\gamma$ is bounded by the $k$-th upper Bohl exponent $\Bupper_k$ such that $\gamma<\beta_k$, because the $(k+1)$-th exponent is not modified by the observer gain.
This suggests that the region of convergence can be increased by taking more exponents into account, i.e., by increasing the number of columns in $\bar \Q$, if the system is not merely detectable but has stronger observability properties. 
This effect is also demonstrated in the numerical simulation examples.

\changed{\begin{remark}
		In general, the number of non-negative upper Bohl exponents $j^\star$ depends on the specific observer trajectory. 
		The Bohl exponents, however, are robust with respect to bounded vanishing perturbations, see~\cite[Theorem 5.2 and Remark 5.5]{daleckii2002stability}.
		Because the trajectories of the nonlinear system~\eqref{eq:sysNLall} are also trajectories of the observer (for zero initial error), it hence suffices to obtain $j^\star$ from simulation studies of the original system. 
		If the initial estimation error is small, the observer trajectory can then be regarded as perturbed trajectory of the true system trajectory with the same $j^\star$.
		The Bohl exponents of the linearizations along the trajectories in general depend on the specific trajectory.
		For the special class of ergodic measure preserving systems, Oseledet's multiplicative ergodic theorem guarantees that the Lyapunov exponents are independent of the specific trajectory, see~\cite[p. 8-9]{arnold1986lyapunov} or~\cite[Theorem 2.1]{johnson1987ergodic}.
		To the authors' knowledge, a similar result regarding the Bohl exponents or the exponential dichotomy spectrum is still missing.
		A step towards this direction is provided in~\cite{johnson1987ergodic}, where the relationship between Lyapunov exponents and the so-called dynamical spectrum is investigated.
		The dynamical spectrum is a generalization of~\Cref{def:dichotomyspectrum} to nonlinear systems and is independent of the specific linearization.
		It is hard to compute in practice, however.
	\end{remark}
}

\changed{
	\subsection{A Class of Uniformly Observable Systems}\label{sec:uniformlyobservable}
	For special classes of nonlinear systems, the assumptions~\ref{as:NL1} to~\ref{as:NL3} are guaranteed to hold and verifiable conditions depending only on the system properties can be stated. 
	In particular, this section considers nonlinear systems which are diffeomorphic to 
	\begin{subequations}\label{eq:sys:NLobsv}
		\begin{align}
			\dot \x  &= \bar\A \x + \bar \f(\x,\u)\\
			\y &= \bar\C \x
		\end{align}
		where the state $\x\in\mathds R^n$ is partitioned according to 
		\begin{equation}
			\x = \begin{bmatrix}\x^{(1)}\\ \x^{(2)}\\ \vdots \\ \x^{(p)}\end{bmatrix}\quad 
			\text{with } \x^{(k)} = \begin{bmatrix} x_{k,1}\\
				x_{k,2}\\
				\vdots \\
				x_{k,l_k}\end{bmatrix} \in\mathds R^{l_k},\, k=1,\ldots, p,\; \text{ and } \;\sum_k l_k = n.
		\end{equation}
		The rest of the system is characterized by
		\begin{align}
			\bar \A &= \diag{\bar \A_1,\, \bar \A_2,\,\ldots,\, \bar \A_p},\; \bar\A_k = \begin{bmatrix}
				0 & 1 & 0  &\cdots & 0 \\
				0 & 0 & 1 & \cdots & 0\\
				& & & \ddots & 0 \\
				0 &0&0& \cdots &0 
			\end{bmatrix}_{(l_k\times l_k)} \\
			\bar \C &= \diag{\bar \C_1,\, \bar \C_2,\, \ldots,\, \bar \C_p },\; \bar \C_k = \begin{bmatrix}
				1 & 0 & \cdots & 0 
			\end{bmatrix}_{(1\times l_k)}.
		\end{align}
		The nonlinear function $\bar \f$ is structured according to 
		\begin{equation}
			\bar \f(\x,\u) = \begin{bmatrix}
				\bar \f^{(1)}(\x,\u) \\ \bar \f^{(2)}(\x,\u) \\ \vdots \\ \bar \f^{(p)}(\x,\u)
			\end{bmatrix}
			\text{ with } \bar \f^{(k)}(\x,\u) = \begin{bmatrix}
				f_{k,1}(\x,\u)\\
				f_{k,2}(\x,\u)\\
				\vdots \\
				f_{k,l_k}(\x,\u)\\
			\end{bmatrix}
		\end{equation}
		and
		$\frac{\partial f_{i,j}}{\partial x_{v,w}} = 0$ for all $i,v \leq p$ and all $w>j$. 
	\end{subequations}
	Moreover, $\bar \f$ is assumed to be globally Lipschitz continuous in $\x$ and at least two times continuous differentiable, where the corresponding Hessian matrix of each component of $\bar \f$ is bounded.
	
	It is well known that for general nonlinear systems, the input may destroy the system's observability. 
	Systems diffeomorphic to~\eqref{eq:sys:NLobsv} are called uniformly observable for any input~\cite{krener2003theconvergence}, because observability does not depend on the particular input.
	Systems in the form~\eqref{eq:sys:NLobsv} play an important role in the observer design for nonlinear systems, see, e.g.~\cite{gauthier1992asimple,hammouri2003nonlinear,farza2004observer,hammouri2010highgain,gauthier2001deterministic,bernard2022observer}.
	A lot of work has been devoted to determining existence conditions for transformations to the form~\eqref{eq:sys:NLobsv} or more general upper triangular forms, see~\cite[Sec. 6]{bernard2022observer} and the references therein for a good overview on existing procedures.
	It is worth to mention that system~\eqref{eq:sys:NLobsv} is a generalization of the phase variable form~\cite{bestle1983canonical}.
	
	For system~\eqref{eq:sys:NLobsv}, Krener~\cite{krener2003theconvergence} shows that the extended Kalman-Bucy filter~\eqref{eq:EKBF} is a local exponential observer.
	In particular, the proof of~\cite[Theorem 1.1.1]{krener2003theconvergence} establishes the boundedness of $\P(t)$ in~\eqref{eq:RiccatiEKBF} independently of the observer trajectory. 
	This important result is summarized in 
	\begin{lemma}[boundedness of $\P(t)$]\label{le:Riccatibound}
		Assume that the nonlinear system is in the form~\eqref{eq:sys:NLobsv}.
		Moreover $\bar \f$ is globally Lipschitz continuous and at least twice continuously differentiable and the corresponding Hessian matrix of each component of $\bar \f$ is bounded.
		Then, for all $\P(t_0)\succeq \bm 0$ and all bounded $\G(t)\succeq \bm 0$, there exists a constant $p_2>0$ such that for the solution $\P(t)$ of the corresponding Riccati equation~\eqref{eq:RiccatiEKBF} it holds that $\P(t)\leq p_2\I_n$ for all $t\geq t_0$.
	\end{lemma}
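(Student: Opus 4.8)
The plan is to reduce the claim to the classical fact that uniform complete observability, together with boundedness of the dynamics and of $\G$, forces the forward Riccati solution to stay bounded above, and then to check that for systems of the form~\eqref{eq:sys:NLobsv} the relevant observability and boundedness constants can be chosen \emph{uniformly in the observer trajectory and input}. That uniformity is precisely what makes $p_2$ depend only on $\|\P(t_0)\|$, $\sup_t\|\G(t)\|$ and the structural data, and not on $\hat\x(\cdot)$ or $\u(\cdot)$. Throughout, this is essentially the argument behind the proof of~\cite[Theorem 1.1.1]{krener2003theconvergence}, which I would reorganise as follows.

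First I would record the bound on the dynamics. Linearising~\eqref{eq:sys:NLobsv} along an arbitrary pair $(\hat\x(\cdot),\u(\cdot))$ gives $\A(t)=\bar\A+\frac{\partial\bar\f}{\partial\x}(\hat\x(t),\u(t))$ and $\C(t)=\bar\C$. Since $\bar\f$ is globally Lipschitz in $\x$, its Jacobian is bounded by the Lipschitz constant $L$, so $\|\A(t)\|\le\|\bar\A\|+L=:a$ for all $t$, all trajectories and all inputs; in particular the transition matrix $\mtPhi(t,s)$ of $\dot\x=\A(t)\x$ satisfies $\|\mtPhi(t,s)\|\le e^{a|t-s|}$, again uniformly, and the Hessian bound yields the constant $\kappa$ of Assumption~\ref{as:NL3}.

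Next comes the heart of the matter: the pair $(\A(t),\bar\C)$ is uniformly completely observable with constants depending only on $L$ and on $n,p,l_1,\ldots,l_p$. The structural hypothesis $\partial f_{i,j}/\partial x_{v,w}=0$ for $w>j$, together with the shift structure of $\bar\A_k$ and $\bar\C_k$, means that inside each block $\A(t)$ is a chain of integrators perturbed by a bounded, strictly block-lower-triangular matrix, so that differentiating the $k$-th output successively in time recovers $x_{k,1},\ldots,x_{k,l_k}$ up to bounded causal corrections — the canonical ``uniformly observable for any input'' structure. Quantitatively, one has to show that there exist $T>0$ and $\alpha_1>0$, depending only on $L$ and the structural data, such that $\N(t_0+T,t_0)\succeq\alpha_1\I_n$ for every $t_0\ge0$ and every linearisation trajectory. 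I expect \emph{this} to be the main obstacle: it is where the special form~\eqref{eq:sys:NLobsv} is indispensable, and where the uniformity over all trajectories has to be extracted carefully — e.g. by a high-gain/time-scale rescaling or an induction over the block coordinates showing that the bounded correction terms cannot destroy the coercivity of the Gramian.

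Finally I would combine these ingredients with standard Riccati comparison theory. By monotonicity of the forward Riccati flow in the initial datum and in the forcing term, it suffices to bound the solution with $\P(t_0)$ replaced by $\pi_0\I_n$, $\pi_0:=\|\P(t_0)\|+1$, and $\G(t)$ replaced by $g\I_n$, $g:=\sup_t\|\G(t)\|$, since the true solution is dominated by this one. On the window $[t_0,t_0+T]$ one drops the negative quadratic term, so $\P(t)\preceq\widetilde\P(t)$ with $\dot{\widetilde\P}=\A\widetilde\P+\widetilde\P\A^\transp+g\I_n$, whence $\|\widetilde\P(t)\|\le\pi_0 e^{2aT}+\frac{g}{2a}e^{2aT}=:c_1$ by the explicit variation-of-constants bound. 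For $t\ge t_0+T$ one invokes the classical estimate (as in~\cite{bucy1972thericcati,bucy1967global}, which also underlies the extended Kalman--Bucy stability analysis in~\cite{KONRAD1998,krener2003theconvergence}) that uniform complete observability with window $T$ and coercivity $\alpha_1$, together with $\|\A(t)\|\le a$ and $0\preceq\G(t)\preceq g\I_n$, forces $\P(t)\preceq c\,\I_n$ with $c$ depending only on $T,\alpha_1,a,g$ — note that only the upper bound is used here, so $\G$ need not be bounded below. Taking $p_2:=\max(c_1,c)$ proves the lemma, and by construction $p_2$ depends only on $\|\P(t_0)\|$, $\sup_t\|\G(t)\|$ and the data of~\eqref{eq:sys:NLobsv}, not on the observer trajectory.
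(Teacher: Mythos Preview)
Your proposal is correct, but it takes a different route from the paper's (and Krener's) argument in the key step. The paper simply cites the proof of~\cite[Theorem 1.1.1]{krener2003theconvergence}: there, $\P(t)$ is interpreted as the optimal cost of the deterministic minimum-energy filtering problem, and a \emph{suboptimal} high-gain observer for the structure~\eqref{eq:sys:NLobsv} is constructed whose cost is finite and depends only on the Lipschitz constant and the block dimensions; optimality immediately gives $\P(t)\preceq(\text{suboptimal cost})\,\I_n$, uniformly in the trajectory. The extension from $\P(t_0)\succ\bm 0$ to $\P(t_0)\succeq\bm 0$ is then done exactly as you do, by Riccati monotonicity. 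You instead pass through an explicit uniform complete observability estimate on the Gramian and then invoke the classical Bucy-type upper bound for the Riccati equation. Both arguments exploit the same triangular structure and yield the same trajectory-independent constant; Krener's optimality shortcut avoids having to quantify the Gramian coercivity $\alpha_1$ directly (the high-gain observer substitutes for that computation), whereas your route is perhaps more transparent about \emph{why} the structure matters and connects more cleanly with the uniform-complete-observability framework used elsewhere in the paper.
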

	\begin{proof}
		The proof for the bound with $\P(t_0)\succ 0$ is given in the proof of~\cite[Theorem 1.1.1]{krener2003theconvergence}.
		Remarkably, this bound is independent of the observer trajectory and hence it holds for any trajectory.
		The bound is established via the design of a suboptimal high-gain observer, which gives an upper bound on the cost of the optimal filtering problem and hence an upper bound on the solution of the Riccati equation~\eqref{eq:RiccatiEKBF}.
		The boundedness for all $\P(t_0)\succeq \bm 0$ then follows from the order preserving property of the Riccati differential equation, see~\cite[Theorem 4.1.4 and Corollary 4.1.5]{aboukandil2012matrix}.
	\end{proof}
	The previous result together with the specific system structure guarantees that the assumptions~\ref{as:NL1} to~\ref{as:NL3} are fulfilled.
	Hence, the existence of the extended subspace observer proposed in Section~\ref{sec:ESO} is guaranteed for this system class.
	This is summarized in the following
	\begin{theorem}[existence of extended subspace observer]\label{thm:ESOuniformlyobservable}
		For systems of the form~\eqref{eq:sys:NLobsv}, where $\bar\f$ is globally Lipschitz continuous and at least twice continuously differentiable and the corresponding Hessian matrix of each component of $\bar \f$ is bounded, the assumptions~\ref{as:NL1}--\ref{as:NL3} are fulfilled for all $\P_1(0)\succ \bm 0$ and all $g_1\I\preceq \G_1(t)\preceq g_2 \I$ with arbitrary constants $0<g_1\leq g_2$.	
	\end{theorem}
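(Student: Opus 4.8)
The plan is to verify assumptions \ref{as:NL1}--\ref{as:NL3} one at a time, drawing on the structural hypotheses on $\bar\f$ and on \Cref{le:Riccatibound}. First, \ref{as:NL1} is immediate: the Jacobian $\A(t)$ in~\eqref{eq:NLA(t)} is the Jacobian of $\f$ evaluated along the observer trajectory, so if $\bar\f$ is globally Lipschitz continuous in $\x$, then $\partial \bar\f/\partial\x$ is bounded uniformly, and $\bar\A$ is a fixed nilpotent matrix; after conjugating by the diffeomorphism to the form~\eqref{eq:sys:NLobsv} (whose derivative and inverse-derivative are assumed bounded, as part of the notion of Lyapunov transformation / admissible coordinate change) the corresponding $\A(t)$ for the original system is bounded as well. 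Next, \ref{as:NL3} follows from the standard Taylor-remainder estimate: since each component $f_i$ is twice continuously differentiable with bounded Hessian, the quadratic remainder $\bm\eta$ in the expansion around $\e=\vc 0$ satisfies $\|\bm\eta(\e,\hat\x,\u)\|\leq \kappa\|\e\|^2$ with $\kappa = \tfrac12\max_i \sup \|\H_{f_i}\|$, exactly as recorded in \Cref{rem:assumptions}.

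The substantive part is \ref{as:NL2}, the two-sided bound $p_1\I_k\preceq\P_1(t)\preceq p_2\I_k$ on the reduced Riccati solution~\eqref{eq:RiccatiNonlinear}. For the lower bound I would argue directly from the Riccati comparison principle: with $g_1\I_{k}\preceq\G_1(t)$ and $\P_1(0)\succ\bm 0$, the solution stays positive definite, and comparing $\P_1$ to the solution of the linear Lyapunov-type differential inequality obtained by dropping the negative quadratic term and using boundedness of $\B_1(t)$ (which follows from \ref{as:NL1} and boundedness of the skew-symmetric $\S_1(t)$, whose entries are $\bar\q_i^\transp\A(t)\bar\q_j$) yields $\P_1(t)\succeq p_1\I_k$ for a constant $p_1>0$ depending only on $g_1$ and the uniform bound on $\|\B_1\|$; this is the computation referenced in \Cref{rem:assumptions}. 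For the upper bound I would \emph{not} try to prove uniform complete observability of the reduced pair $(\B_1(t),\bar\C(t))$ along the observer trajectory directly — that is exactly the quantity that depends on the unknown trajectory and is hard to control. Instead I would invoke \Cref{le:Riccatibound}: the full-order Riccati solution $\P(t)$ of~\eqref{eq:RiccatiEKBF} satisfies $\P(t)\preceq p_2\I_n$ for a constant independent of the observer trajectory, for \emph{any} $\P(t_0)\succeq\bm 0$ and any bounded $\G(t)\succeq\bm 0$. Then, by \Cref{le:Riccatireduced}, choosing the full-order initial condition and weighting matrix as in~\eqref{eq:Riccati_IC} — i.e. $\P(t_0)=\operatorname{diag}(\P_1(t_0),\bm 0)$, $\G_{12}=\bm 0$, $\G_2=\bm 0$, and $\G_1$ the given reduced weight — makes the full-order solution equal to $\operatorname{diag}(\P_1(t),\bm 0)$ for all $t\geq t_0$; comparing the $(1,1)$-blocks gives $\P_1(t)\preceq p_2\I_k$. (One should check the hypotheses of \Cref{le:Riccatibound} are met by this choice: $\P(t_0)\succeq\bm 0$ holds since $\P_1(t_0)\succ\bm 0$, and the block-diagonal $\G(t)$ is bounded and positive semidefinite.) This closes \ref{as:NL2} with a bound $p_2$ that, like Krener's, does not depend on which observer trajectory was realized.

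The main obstacle is the upper bound on $\P_1$, and the key idea that makes it go through is precisely the reduction-embedding trick of \Cref{le:Riccatireduced}: rather than analyzing the reduced Riccati flow on its own — where the observability Gramian of $(\B_1,\bar\C)$ is genuinely trajectory-dependent and not obviously uniformly bounded below — one embeds it as an invariant block of the full-order Kalman--Bucy Riccati flow, for which Krener's suboptimal high-gain construction delivers a trajectory-independent a priori bound. A secondary technical point to be careful about is the admissibility of the coordinate change: the estimates above are stated in the canonical coordinates~\eqref{eq:sys:NLobsv}, and to transfer boundedness of $\A(t)$, $\bm\eta$, and of the Hessians back to the original $\f$ one uses that the diffeomorphism and its inverse have bounded derivatives on the relevant set, which is part of the standing hypothesis that the original system is \emph{diffeomorphic} to the canonical form in the sense used for observer design; I would state this explicitly rather than leave it implicit.
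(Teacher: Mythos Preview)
Your plan matches the paper on the substantive point: bounding $\P_1(t)$ from above by embedding the reduced Riccati flow into the full-order one via \Cref{le:Riccatireduced} and then invoking Krener's trajectory-independent bound from \Cref{le:Riccatibound} is exactly the paper's argument, and \ref{as:NL1}, \ref{as:NL3} are handled the same way via \Cref{rem:assumptions}.

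The gap is in your lower-bound argument for $\P_1$. Dropping the term $-\P_1\bar\C^\transp\bar\C\P_1$ from~\eqref{eq:RiccatiNonlinear} makes the right-hand side \emph{larger} in the Loewner order, so the comparison solution you obtain is an \emph{upper} bound on $\P_1$, not a lower one; and since $\B_1$ collects precisely the non-negative Bohl exponents, the resulting linear Lyapunov flow is unstable and gives no useful information anyway. The paper instead argues at the boundary of the positive-definite cone: if $t_1$ were the first time $\P_1(t_1)\v=\bm 0$ for some nonzero $\v$, then every term in $\v^\transp\dot\P_1(t_1)\v$ vanishes except $\v^\transp\G_1(t_1)\v\geq g_1\|\v\|^2>0$, contradicting $\v^\transp\P_1(t)\v>0$ for $t<t_1$ and $=0$ at $t_1$. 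As a minor aside, the theorem is stated for systems already in the form~\eqref{eq:sys:NLobsv}, so the paper works directly in those coordinates and no transfer of bounds through a diffeomorphism is required.
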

	\begin{proof}
		Assumptions~\ref{as:NL1} and~\ref{as:NL3} are guaranteed by the global Lipschitz assumption and the boundedness assumption on the Hessian, see also~\Cref{rem:assumptions}.
		Hence it remains to show that~\ref{as:NL2} holds. 
		The upper bound on $\P_1(t)$ follows straightforwardly from~\Cref{le:Riccatireduced} and~\Cref{le:Riccatibound} for any positive definite and bounded $\G_1(t)$.
		The proof that $\P_1(t)$ is also bounded from below follows from continuity arguments.
		Under the assumption that $\P_1(t)$ becomes positive semidefinite for some $t=t_1 > t_0$, there exists a non-trivial vector $\v$ such that $\v^\transp \P_1(t_1) \v = 0$. 
		Multiplying the Riccati equation~\eqref{eq:RiccatiNonlinear} with $\v^\transp$ from the left and $\v$ from the right and evaluating the derivative at $t=t_1$ gives
		\begin{equation}
			\frac{\mathrm d}{\mathrm d t}\v^\transp \P_1 \v\big|_{t=t_1}=\v^\transp \G_1(t_1) \v \succ \bm 0.
		\end{equation}
		Due to continuity and since $\v^\transp \P_1(0)\v \succ > 0$, the solution of~\eqref{eq:RiccatiNonlinear} must remain positive definite and lower bounded according to $p_1\I\preceq \P_1(t)$ with some $p_1>0$.
	\end{proof}
	\Cref{thm:ESOuniformlyobservable} guarantees the existence of the extended subspace observer for systems in the form of~\eqref{eq:sys:NLobsv}.
	The global Lipschitz continuity of $\bar \f$ is also required in high-gain observer designs, see, e.g.,~\cite{farza2004observer}.
	This condition is restrictive since the Lipschitz condition is usually satisfied only locally. 
	If, however, the state $\x(t)$ of system~\eqref{eq:sys:NLobsv} lies in a bounded set, it is possible to relax this assumption using prolongation techniques, see~\cite[Remark 2.1]{farza2004observer} and the references therein.
}
\section{Numerical results}\label{sec:example}
The nonlinear Lorenz'96 (L'96) model introduced in~\cite{lorenz96predictability} is widely used as a benchmark of data assimilation algorithms~\cite{trevisan2011onthekalman,carrassi2008dataassimilation,frank2018adetectability}.
It is a system of nonlinear differential equations recursively defined by 
\begin{equation}\label{eq:lorenz96nonlinear}
	\dot{z}_i=(z_{i+1}-z_{i-2})z_{i-1}-z_i + F,\qquad i=1,\ldots,n,
\end{equation}
with the notational convention $z_{-1}= z_{n-1}$, $z_{0}= z_n$, and \mbox{$z_{n+1}= z_1$}. 
The state vector is $\z = [z_1\;\cdots \; z_n]^\transp\in\mathds{R}^n$.
For $F=8$, this model exhibits a chaotic behavior.
With this configuration, the model is used as a benchmark examples for data assimilation in~\cite{frank2018adetectability,palatella2013Lyapunov,Bocquet2017a}.

The following simulations are carried out for a model of order $n=18$. 
The output of the model is chosen as $\y(t)=\C_p \z(t)$ with a constant $p\times n$ matrix $\C_p$. 
The rows of $\C_p$ are chosen such that only one state is measured per row and the ``sensors'' are distributed with equal distance over all state variables.
The matrix is given by
\begin{equation}\label{eq:simoutput}
	\C_p = \begin{bmatrix} \b_1 & \b_{d+1} & \cdots & \b_{(p-1)d+1}  \end{bmatrix}^\transp,
\end{equation} 
where $\b_i$ is the i-th standard basis vector and $d=\left\lfloor \frac{n}{p}\right\rfloor$, where $\lfloor \cdot \rfloor$ denotes rounding to the nearest integer towards zero.

The initial condition is chosen as $z_{i,0}= \sin\left(\frac{i-1}{n}2\pi\right)$ at initial time $t_0=0$.
For evaluation purposes, also different initial conditions were chosen. 
The obtained results were quantitatively similar and in particular the approximated Lyapunov exponents and the upper and lower Bohl exponents were approximately the same. 

First, the spectral intervals of the system are investigated. 
All differential equations are solved using fixed step 4th order Runge-Kutta integration with a step-size of $T_s = 0.005$. 
For solving the differential equation \eqref{eq:redQR2} to obtain $\bar \Q$, a projected integrator~\cite{Dieci1994} is implemented, where the orthonormalization is carried out after each simulation time step.

The results for all $n=18$ spectral intervals are given in~\Cref{tab:spectralintervalsL96} in \Cref{sec:appendix_results}. 
The final time of this simulation is chosen as $T_f=1500$ in order to be able to choose the averaging window length $H$ for the approximation of the exponential dichotomy spectral intervals sufficiently large. 
For a window length of $H=300$, the $7^\text{th}$ approximated upper Bohl exponent is positive whereas for an increased window length $H=800$, this value is negative.
The spectral intervals $\Lambda_7$ to $\Lambda_{18}$ are thus negative and moreover all intervals are disjoint. 
The results show that the number of non-negative Lyapunov exponents $k^\star$ is equal to the number of non-negative upper Bohl exponents $j^\star$.

Now, the design of the extended subspace observer is carried out for $p=5$ as the dimension of the output vector $\y(t)$. 
The initial estimate is chosen as $\hat \x_0 = \x_0 + \bm \xi_0$ with $\bm \xi_0$ as a random perturbation. 
The components $\xi_{0,i}$ of $\bm \xi_0$ are chosen to be uniformly distributed on an interval $(-\delta,\delta)$ with $\delta>0$ as a simulation parameter.
For $k\leq 5<j^\star$, no convergence of the estimation error could be achieved independently of the size of the initial perturbation. 
This coincides with the observations from the approximated spectral intervals.
Hence, in a first step, $k=k^\star=j^\star=6$  is chosen in the observer design and an ensemble simulation with $N=50$ simulation runs is carried out.
The matrix $\G_{\changed{1}}(t)$ in the differential Riccati equation~\eqref{eq:RiccatiNonlinear} is chosen as a constant matrix $\G_{\changed{1}} = 10\I_k$ in the following simulations.

The (point-wise) minimum and maximum estimation errors together with the median and the 80\%-quantile are depicted in Fig.~\ref{fig:l96ensemblek6}. 
The latter quantity is a (point-wise) upper bound for 80\% of the estimation errors.
The expected minimum convergence rate of the observer error, i.e., the approximated 7-th upper Bohl exponent is also depicted.
This indicates that the estimation error converges exponentially with the rate $\beta_7^{800}\approx -0.0320$.
This convergence rate is very small and, as predicted by theory, the stability of the nonlinear error system is very sensitive to the size of the initial estimation error $\e_0$. 
This behavior can be seen in the simulations and hence the interval bound for the uniform distribution of the initial error is chosen as $\delta=10^{-4}$.

In order to increase the convergence speed and to decrease sensitivity with respect to the magnitude of the initial perturbation, the number of columns in $\bar \Q$ was increased to $k=7$.
The results of this ensemble simulation is depicted in Fig.~\ref{fig:l96ensemblek7}.
The resulting exponential convergence is now achieved at an approximate rate of $\beta_8^{800}\approx-0.2760$, because this is the first Bohl exponent which is not modified by the observer feedback gain.
The convergence is faster compared to $k=6$ and moreover the robustness with respect to the initial perturbation is improved. Hence, the size of the initial perturbation could be increased to $\delta=10^{-3}$. 
\changed{Towards the end of the simulation, the estimation error norm saturates around $10^{-13}$ due to numerical integration and floating point error. 
}

Taking into account one more spectral interval to be modified via the observer gain, the convergence speed can again be increased as depicted in Fig.~\ref{fig:l96ensemblek8} for $k=8$.
The size of the initial perturbation was again increased to $\delta=10^{-2}$ and a convergence of the estimation error was achieved for all simulation runs.  
\begin{figure}
	\centering
	\includegraphics[width=0.85\textwidth]{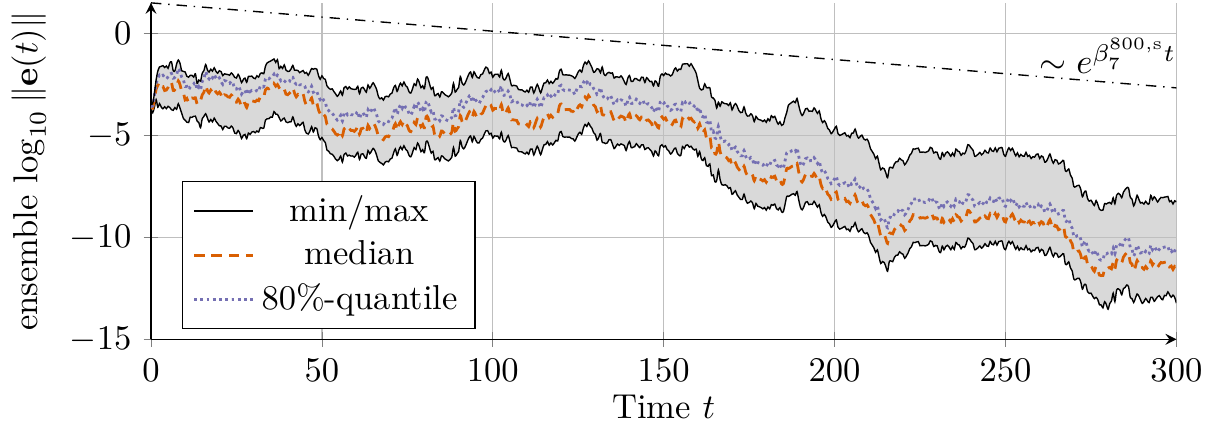}
	\caption{Ensemble estimation error for the L'96 model with $k=6$.}
	\label{fig:l96ensemblek6}
\end{figure}
\begin{figure}
	\centering
	\includegraphics[width=0.85\textwidth]{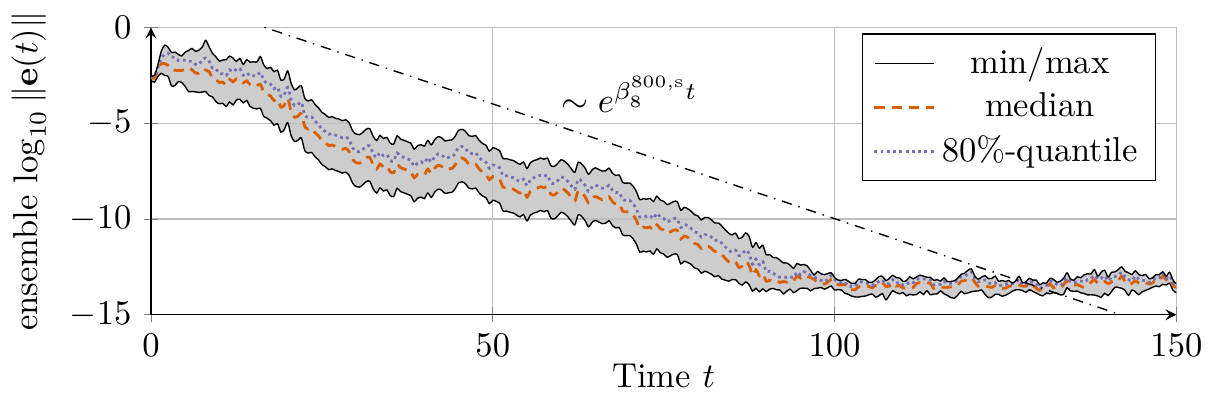}
	\caption{Ensemble estimation error for the L'96 model with $k=7$.}
	\label{fig:l96ensemblek7}
\end{figure}
\begin{figure}
	\centering
	\includegraphics[width=0.85\textwidth]{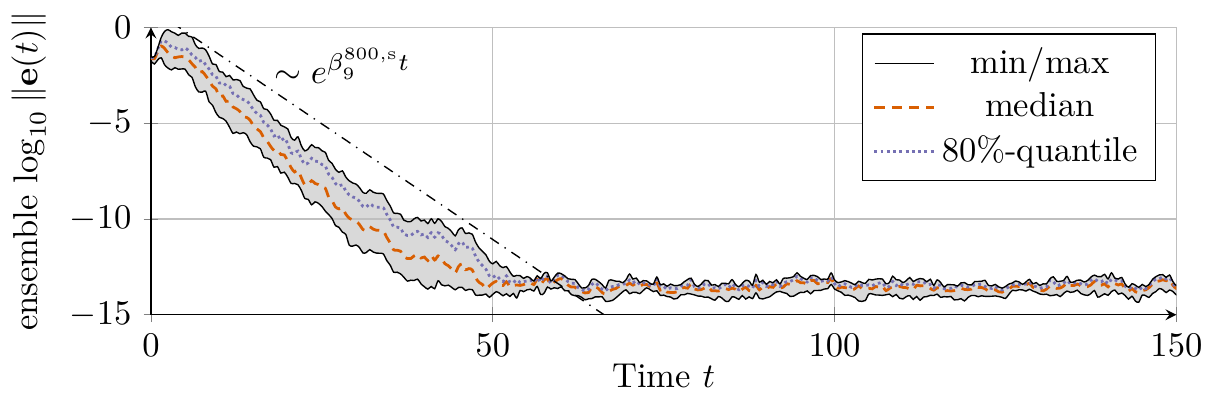}
	\caption{Ensemble estimation error for the L'96 model with $k=8$.}
	\label{fig:l96ensemblek8}
\end{figure}
In order to analyze the detectability properties of the Lorenz'96 model, the observer was implemented for $k=j^\star=6$ with the matrix $\G_{\changed{1}}(t)=\vc 0$ in~\eqref{eq:RiccatiNonlinear}.
Then, the obtained differential Riccati equation is a differential equation for the inverse of the constructibility Gramian on the considered subspace.
This inverse, i.e., $\P\changed{_1}(t_0)$, cannot be initialized correctly, because the constructibility Gramian is zero at the initial time. 
\changed{However, independently of the positive definite initialization, all solutions of $\P\changed{_1}(t)$ converge to each other and hence to the true inverse asymptotically, see~\cite[Sec. 3]{reddy2020asymptotic}.
}
This suggests that if $\P_{\changed{1}}(t)$ is upper bounded, the constructi\-bility Gramian is lower bounded and the linearization along the estimated trajectory is uniformly completely constructible on the considered subspace.
The largest eigenvalue of $\P\changed{_1}(t)$ for different measurement configurations is depicted in Fig.~\ref{fig:l96sigmapmax}. 
This result indicates that the considered trajectory is uniformly completely observable on the unstable subspace for any number of measurements. 
\begin{figure}[htbp]
	\centering
	\includegraphics[width=0.75\textwidth]{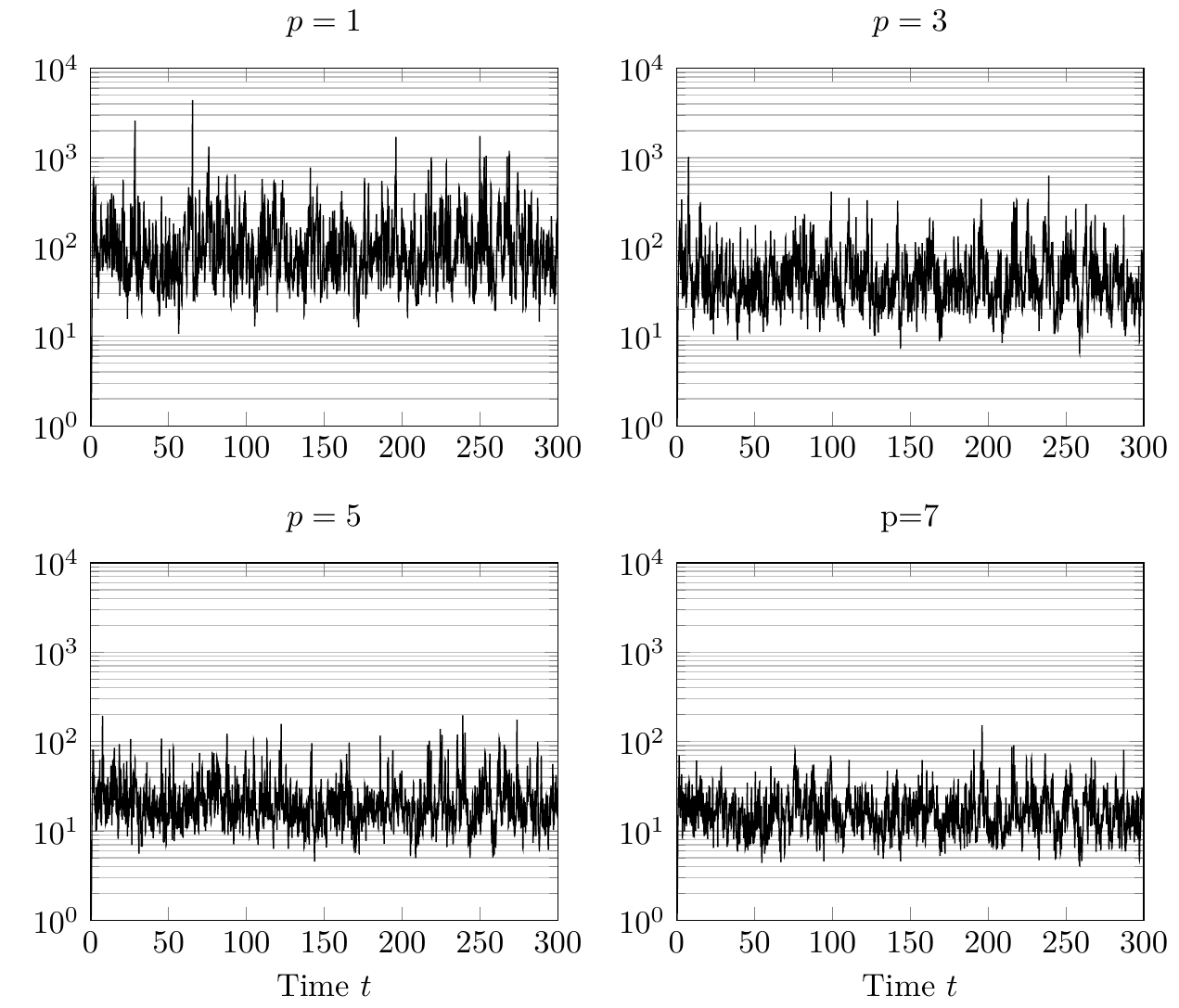}
	\caption{Maximum eigenvalue of $\P\changed{_1}(t)$ for the L'96 model with different measurement configurations.}
	\label{fig:l96sigmapmax}
\end{figure}
An additional insight can be obtained by the investigation of the smallest eigenvalue of $\P\changed{_1}(t)$. 
The number of measurements was now again chosen to be $p=5$.
If one considers only the unstable modes in the observer gain, i.e. $k=6$, the smallest eigenvalue is uniformly bounded from below. 
However, if one takes into account an additional mode in the observer gain by choosing $k=7$, the smallest eigenvalue tends to zero, see Fig.~\ref{fig:l96sigmapmin}. 
This indicates that the observer gain is ``losing strength'' in the already uniformly asymptotically stable directions. 
This is avoided by choosing $\G(t)$ as a positive definite matrix as proposed in the present observer design.
A similar effect is also discussed for a deterministic interpretation of the Kalman-Bucy filter in \changed{\cite[Remark 3.2]{reddy2020asymptotic} and is also well recognized in data assimilation on the unstable subspace~\cite{trevisan2011onthekalman,Bocquet2017}}.
\begin{figure}[htbp]
	\centering
	\includegraphics[width=0.75\textwidth]{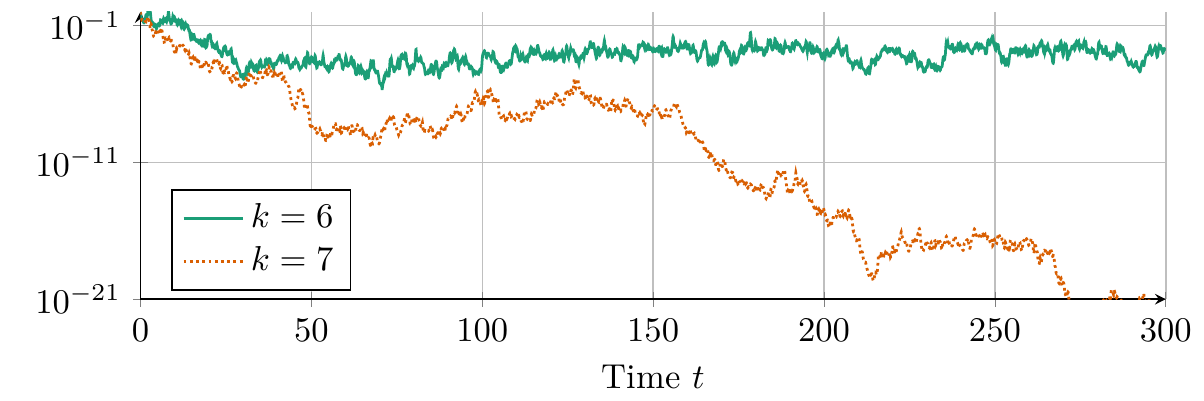}
	\caption{Minimum eigenvalue of $\P\changed{_1}(t)$ for the L'96 model with different values for $k$ and $\Q(t)=\vc 0$.}
	\label{fig:l96sigmapmin}
\end{figure}

\changed{To test the observer for various system trajectories, the upper Bohl exponents for random initial conditions on the unit sphere of the L'96 system were computed. 
	The maximum, minimum and average values of the 9 largest upper Bohl exponents for 500 simulation runs is given in~\Cref{tab:upperBohlL96}.
	For all computed trajectories, the $8^{\text{th}}$ Bohl exponent was negative and hence $k=7$ is a reasonable choice.
	The ensemble estimation error for $50$ observer simulation runs and an initial perturbation of $\delta=10^{-4}$ is depicted in~\Cref{fig:l96ensemblek7robustness}.
	The dashed blue line shows a convergence rate proportional to the mean over all $\beta_8^{900,\mathrm{s}}$, which suggests that this exponent governs the convergence behavior.
}
\begin{figure}
	\centering
	\includegraphics[width=0.85\textwidth]{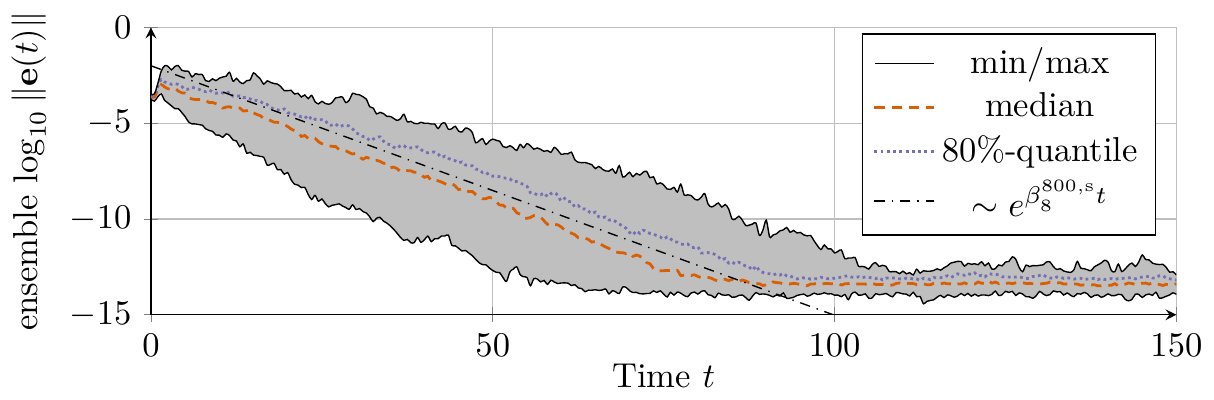}
	\caption{Ensemble estimation error for the L'96 model with $k=7$ and random initial conditions.}
	\label{fig:l96ensemblek7robustness}
\end{figure}

\section{Conclusions and Outlook}
\label{sec:conclusions}

This work proposes a new observer design strategy for linear time-varying and nonlinear systems.
Stability proofs for the estimation error dynamics show global exponential stability in the time-varying case and local exponential stability in the nonlinear case.
The feedback gain is computed only for the unstable modes of the estimation error dynamics, which allows to reduce the computational complexity of the observer compared to classical approaches like the extended Kalman-Bucy filter.
Simulation results moreover suggest that additionally including stable modes in the design may increase the robustness in the nonlinear case.
In future research, the assignment of the exponential dichotomy spectrum will be investigated. 
This may foster the development of efficient observer designs, which do not require the solution of a (computationally costly) differential Riccati equation.
\changed{Moreover, the conditions for the existence of the proposed observer will be investigated in more detail.
For this purpose, it is important to guarantee the upper bound of the observer Riccati equation independent of the observer trajectory. 
For systems which are fully observed on the unstable subspace, it may be possible to guarantee this upper bound by adapting the bound from~\cite{karvonen2018bounds}.}

\appendix

\section{Proof of~{\Cref{thm:approximation}}}
The proof for relation~\eqref{eq:bapprox1} can be found in the proof of~\cite[Theorem 8.4]{Dieci2003}. 
Relation~\eqref{eq:bapprox2} is proven for one upper Bohl exponents $\Bupper_j$. 
First, assume an arbitrary $\epsilon>0$. 
The scalar system $\dot \eta(t) = (b_{jj}(t)-\lambda)\eta(t)$ with $\lambda = \Bupper_j+\epsilon$ has an exponential dichotomy with $P=1$ and, equivalently, there exist reals $\alpha>0$ and $d\geq 0$ such that 
\begin{equation}\label{eq:integralsep}
	-\ln\| \mtPhi(t,t_0)\| =\integ{s}{t_0}{t}{\left(\Bupper_j+\epsilon-b_{jj}(s)\right)}\geq \alpha(t-t_0)-d
\end{equation}
holds for all $t_0\in\mathds J$ and all $t\geq t_0$. 
For a sufficiently large $H_0>0$ such that $\alpha-d/H_0>\alpha/2$, it follows that
\begin{equation}\label{eq:prf1}
	\frac{1}{H}\integ{s}{t}{t+H}{\left(\Bupper_j+\epsilon-b_{jj}(s)\right)} \geq \alpha-\frac{d}{H}>\frac{\alpha}{2}
\end{equation}
holds for all $H\geq H_0$.
Splitting the integral in~\eqref{eq:prf1} yields
\begin{equation}
	\frac{1}{H}\integ{s}{t}{t+H}{\left(\Bupper_j +\epsilon\right)} - \frac{1}{H}\integ{s}{t}{t+H}{b_{jj}(s)} =\Bupper_j+\epsilon-\Bnumup_j >\frac{\alpha}{2},
\end{equation}
and hence $\Bupper_j+\epsilon > \Bnumup_j$ for all $H\geq H_0$.
The proof for the remaining upper and lower Bohl exponents follows analogously.
\hfill \qed

\section{Numerical Results}\label{sec:appendix_results}

\begin{table}[htpb]
	\centering
	\caption{\changed{Nine largest} approximated spectral intervals of the Lorenz'96 model \changed{with initial condition \mbox{$z_{i,0}=\sin(\frac{i-1}{n}2\pi)$}}.}\label{tab:spectralintervalsL96}
	\resizebox{0.65\columnwidth}{!}{%
		\renewcommand*{\arraystretch}{1.2}
		\begin{tabular}{|c||c||c|c||c|c|}\hline
			& $\SigmaL$  & \multicolumn{2}{c||}{$\SigmaED$ for ${H=300}$}& \multicolumn{2}{c|}{$\SigmaED$ for ${H=800}$} \\\hline
			interval & $\Llower_i\approx \Lupper_i$ & $\beta_i^{300,\mathrm{i}}$ & $\beta_i^{300,\mathrm{s}}$ & $\beta_i^{800,\mathrm{i}}$ & $\beta_i^{800,\mathrm{s}}$ \\ \hline
			1&	\phantom{-}1.545&    \phantom{-}1.430&    \phantom{-}1.658&   \phantom{-}1.4870  &  \phantom{-}1.5920\\ \hline
			2&	\phantom{-}1.211 &   \phantom{-}1.139&    \phantom{-}1.274&   \phantom{-}1.1920 &   \phantom{-}1.2410\\ \hline
			3&	\phantom{-}0.878 &   \phantom{-}0.790&    \phantom{-}0.959&   \phantom{-}0.8630 &   \phantom{-}0.9190\\ \hline
			4&	\phantom{-}0.570&    \phantom{-}0.494&    \phantom{-}0.643&   \phantom{-}0.5360 &   \phantom{-}0.5930\\ \hline
			5&	\phantom{-}0.283 &   \phantom{-}0.231&   \phantom{-}0.333&    \phantom{-}0.2600  &  \phantom{-}0.2900\\ \hline
			6&	\phantom{-}0.003 &  -0.011&   \phantom{-}0.021&   -0.0050 &  \phantom{-}0.0080	\\ \hline
			7&	-0.045&  -0.096&    \phantom{-}0.015&    -0.0760 &  -0.0320\\ \hline
			8&	-0.296&  -0.365&   -0.228&   -0.3310  & -0.2760	\\ \hline
			9&	-0.579&  -0.639&   -0.526&   -0.5880  & -0.5560	\\ \hline
		\end{tabular}
	}
\end{table}

	\begin{table}	\caption{Nine largest upper Bohl exponents of Lorenz'96 model with random initial condition.}\label{tab:upperBohlL96}
		\centering
		\resizebox{0.8\columnwidth}{!}{%
			\renewcommand*{\arraystretch}{1.2}
			\begin{tabular}{|c||c|c|c|c|c|c|c|c|c|}\hline
				& $\beta_1^{800,\mathrm{s}}$ & $\beta_2^{800,\mathrm{s}}$ & $\beta_3^{800,\mathrm{s}}$ 	 & $\beta_4^{800,\mathrm{s}}$ 	 & $\beta_5^{800,\mathrm{s}}$ 	& $\beta_6^{800,\mathrm{s}}$ 	& $\beta_7^{800,\mathrm{s}}$ 	& $\beta_8^{800,\mathrm{s}}$ 	& $\beta_9^{800,\mathrm{s}}$ \\ \hline
				max &   1.669 &1.266 &	0.932 &	0.628&	0.344&	0.046&	0.015&	-0.246&	-0.518 	\\  \hline
				min &	1.478 &	1.112&	0.813&	0.511&	0.216&	-0.005&	-0.074&	-0.355&	-0.641 \\ \hline
				mean &   1.552& 1.185&    0.869&    0.569&    0.274&    0.004&    -0.022&    -0.307&    -0.576\\ \hline
			\end{tabular}
		}
	\end{table}

\section*{Acknowledgments}
This work was partially supported by the Graz University of Technology LEAD project ``Dependable Internet of Things in Adverse Environments''. The financial support by the Christian Doppler Research Association, the Austrian Federal Ministry for Digital and Economic Affairs and the National Foundation for Research, Technology and Development is gratefully acknowledged.
The authors would moreover like to thank Sergiy Zhuk, IBM Research, Dublin, Ireland, for the many fruitful discussions leading to this interesting research direction.

\bibliographystyle{siamplain}
\bibliography{references}

\end{document}